\definecolor{revpurple}{RGB}{128,0,128}
\newcommand{\rev}[1]{\textcolor{black}{#1}}
\newcommand{\FigEquallySpacedPoints}{%
\begin{tikzpicture}[thick, line cap=round, line join=round, scale=0.85]

\def\Rfig{3.5}%  radius in cm

% Short centered x-axis
\coordinate (Xmin) at ({-\Rfig - 0.8}, 0);
\coordinate (Xmax) at ({\Rfig + 1.0}, 0);
\draw [->, black, thin] (Xmin) -- (Xmax) node [right, black] {$\textsf{Re}$};

\coordinate (Center) at (0,0);

% Circle Gamma
\draw [black, thin] (Center) circle[radius=\Rfig cm];

% Center label
\filldraw[black] (Center) circle (0.5pt);
\node[below, font=\scriptsize] at (0, -0.15) {$\delta$};

% Arc J (red): half-angle 43 deg
\def\Jhalf{43}
\draw [red, very thick] (Center) ++({-\Jhalf}:\Rfig cm)
  arc[radius=\Rfig cm, start angle={-\Jhalf}, end angle=\Jhalf];
\node[red, font=\scriptsize] at ({0}:{\Rfig + 0.2}) {$J$};

% alpha and beta (at +/-40 deg, inside J)
\def\ABhalf{40}
\filldraw[black] (\ABhalf:\Rfig cm) circle (0.8pt);
\node[above, font=\scriptsize, xshift=2pt] at (\ABhalf:\Rfig cm) {$\alpha$};
\filldraw[black] ({-\ABhalf}:\Rfig cm) circle (0.8pt);
\node[below right, font=\scriptsize, xshift=-2pt] at ({-\ABhalf}:\Rfig cm) {$\beta$};

% Point 1
\filldraw[black] ({\Rfig},0) circle (0.8pt);
\node[below, font=\scriptsize, yshift=-2pt] at ({\Rfig}, 0) {$1$};

% Equally spaced points: 2m = 10, angles = k * 36 degrees
% eta_0 at angle 0
\filldraw[violet] (0:\Rfig cm) circle (1.8pt);
\node[blue!80!black, below right, font=\scriptsize\bfseries, yshift=-1pt, xshift=1pt] at (0:\Rfig cm) {$\eta_0$};

% eta_1 (on J)
\filldraw[violet] (36:\Rfig cm) circle (1.8pt);
\node[blue!80!black, right, font=\scriptsize] at ($(36:\Rfig cm) + (0.08, 0)$) {$\eta_1$};
\draw[blue!80!black, thin] ($(36:\Rfig cm) + (126:0.1cm)$) -- ($(36:\Rfig cm) + (-54:0.1cm)$);

% eta_2
\filldraw[violet] (72:\Rfig cm) circle (1.8pt);
\node[blue!80!black, above, font=\scriptsize, yshift=2pt] at (72:\Rfig cm) {$\eta_2$};
\draw[blue!80!black, thin] ($(72:\Rfig cm) + (162:0.1cm)$) -- ($(72:\Rfig cm) + (-18:0.1cm)$);

% eta_3
\filldraw[violet] (108:\Rfig cm) circle (1.8pt);
\node[blue!80!black, above left, font=\scriptsize] at (108:\Rfig cm) {$\eta_3$};
\draw[blue!80!black, thin] ($(108:\Rfig cm) + (198:0.1cm)$) -- ($(108:\Rfig cm) + (18:0.1cm)$);

% eta_4
\filldraw[violet] (144:\Rfig cm) circle (1.8pt);
\node[blue!80!black, left, font=\scriptsize] at (144:\Rfig cm) {$\eta_4$};
\draw[blue!80!black, thin] ($(144:\Rfig cm) + (234:0.1cm)$) -- ($(144:\Rfig cm) + (54:0.1cm)$);

% eta_{-1} (on J)
\filldraw[violet] (-36:\Rfig cm) circle (1.8pt);
\node[blue!80!black, right, font=\scriptsize] at ($(-36:\Rfig cm) + (0.08, 0)$) {$\eta_{-1}$};
\draw[blue!80!black, thin] ($(-36:\Rfig cm) + (-126:0.1cm)$) -- ($(-36:\Rfig cm) + (54:0.1cm)$);

% eta_{-2}
\filldraw[violet] (-72:\Rfig cm) circle (1.8pt);
\node[blue!80!black, below, font=\scriptsize, yshift=-2pt] at (-72:\Rfig cm) {$\eta_{-2}$};
\draw[blue!80!black, thin] ($(-72:\Rfig cm) + (-162:0.1cm)$) -- ($(-72:\Rfig cm) + (18:0.1cm)$);

% eta_{-3}
\filldraw[violet] (-108:\Rfig cm) circle (1.8pt);
\node[blue!80!black, below left, font=\scriptsize] at (-108:\Rfig cm) {$\eta_{-3}$};
\draw[blue!80!black, thin] ($(-108:\Rfig cm) + (-198:0.1cm)$) -- ($(-108:\Rfig cm) + (-18:0.1cm)$);

% eta_{-4}
\filldraw[violet] (-144:\Rfig cm) circle (1.8pt);
\node[blue!80!black, left, font=\scriptsize] at (-144:\Rfig cm) {$\eta_{-4}$};
\draw[blue!80!black, thin] ($(-144:\Rfig cm) + (-234:0.1cm)$) -- ($(-144:\Rfig cm) + (-54:0.1cm)$);

% eta_5 = eta_{-5} (the leftmost point)
\filldraw[violet] (180:\Rfig cm) circle (1.8pt);
\node[blue!80!black, left, font=\scriptsize] at ($(180:\Rfig cm) + (-0.08, 0)$) {$\eta_5$};

\end{tikzpicture}%
}
\newcommand{\FigGeneralizedLemma}{%
\begin{tikzpicture}[thick, line cap=round, line join=round, scale=0.85]

\def\Rfig{3.5}%      1 - delta
\def\Rsmall{2.8}%    1 - delta - mu
\def\ABhalf{40}%     half central angle

% Short centered x-axis
\coordinate (Xmin) at ({-\Rfig - 0.8}, 0);
\coordinate (Xmax) at ({\Rfig + 1.0}, 0);
\draw [->, black, thin] (Xmin) -- (Xmax) node [right, black] {$\textsf{Re}$};

\coordinate (Center) at (0,0);

% Outer circle Gamma (black)
\draw [black, thin] (Center) circle[radius=\Rfig cm];
\node[black, font=\scriptsize] at (120:{\Rfig + 0.25}) {$\Gamma$};

% Arc on Gamma with same central angle (dashed)
\draw [black, thick, densely dashed]
  (Center) ++({-\ABhalf}:\Rfig cm)
  arc[radius=\Rfig cm, start angle={-\ABhalf}, end angle=\ABhalf];

% Inner circle Gamma_mu (blue)
\draw [blue!60!black, thin] (Center) circle[radius=\Rsmall cm];
\node[blue!60!black, font=\scriptsize] at ({200}:{\Rsmall + 0.3}) {$\Gamma_\mu$};

% Center
\filldraw[black] (Center) circle (0.5pt);
\node[below, font=\scriptsize, xshift=-2pt, yshift=-2pt] at (Center) {$\delta$};

% Arc P on inner circle (red)
\draw [red, very thick] (Center) ++({-\ABhalf}:\Rsmall cm)
  arc[radius=\Rsmall cm, start angle={-\ABhalf}, end angle=\ABhalf];
% P label: left of the arc midpoint, slightly above
\node[red, font=\scriptsize, left, xshift=-2pt, yshift=3pt] at ({0}:{\Rsmall cm}) {$P$};

% Midpoint label
\filldraw[black] ({\Rsmall}, 0) circle (0.8pt);
\node[below, font=\scriptsize] at ({\Rsmall}, -0.15) {$1{-}\delta{-}\mu$};

% alpha, beta on inner circle
\filldraw[black] (\ABhalf:\Rsmall cm) circle (0.8pt);
\node[above right, font=\scriptsize, xshift=-3pt] at (\ABhalf:\Rsmall cm) {$\alpha$};
\filldraw[black] ({-\ABhalf}:\Rsmall cm) circle (0.8pt);
\node[below right, font=\scriptsize, xshift=-2pt, yshift=-2pt] at ({-\ABhalf}:\Rsmall cm) {$\beta$};

% Point 1 on outer circle -- shifted right and up to avoid dashed arc
\filldraw[black] ({\Rfig}, 0) circle (0.8pt);
\node[above right, font=\scriptsize] at ({\Rfig + 0.05}, 0.05) {$1$};

\end{tikzpicture}%
}
\newcommand{\rprep}{r_{\textnormal{prep}}}
\newcommand{\rmeas}{r_{\textnormal{meas}}}
\renewcommand{\nu}{\mathfrak{r}}
\begin{document}

\title{SPAM Tolerance for Pauli Error Estimation}
\author{Ryan O'Donnell \thanks{Carnegie Mellon University, \texttt{odonnell@cs.cmu.edu}} \and Samvitti Sharma \thanks{Carnegie Mellon University, \texttt{samvitts@cmu.edu}}}

\maketitle

\begin{abstract}
The Pauli channel is a fundamental model of noise in quantum systems, motivating the task of Pauli error estimation. We present an algorithm that builds on the reduction to Population Recovery introduced in \cite{FO21paulierror}.  Addressing an open question from that work, our algorithm has the key advantage of robustness against even severe state preparation and measurement (SPAM) errors.  To tolerate SPAM, we must analyze Population Recovery on a combined \rev{$Z$-channel}/bit-flip channel, which necessitates extending the complex analysis techniques from \cite{polyanskiy2017sample,DOS17populationrecovery}.  For $n$-qubit channels, our Pauli error estimation algorithm requires only $\exp(n^{1/3})$ unentangled state preparations and measurements, improving on previous SPAM-tolerant algorithms that had $2^n$-dependence even for restricted families of Pauli channels.  We also give evidence that no SPAM-tolerant method can make asymptotically fewer than $\exp(n^{1/3})$ uses of the channel.
\end{abstract}

\section{Introduction}
One of the most widely used theoretical models for noise in experimental quantum systems is the $n$-qubit \emph{Pauli channel}, a mixed-unitary channel that applies $n$-qubit Pauli operators to an input $n$-qubit quantum state according to a fixed probability distribution, referred to as the \emph{Pauli error rates}. \emph{Pauli error estimation} is the task of learning the Pauli error rates of an unknown channel to precision $\epsilon$ in some metric. Since there are $4^n$ Pauli error rates, yet we seek subexponential complexity, a natural choice for the precision metric is~$\ell_\infty$, as this means we need only find and estimate the largest $1/\eps$ Pauli error rates. %(See \Cref{sec:prior} for more discussion.)

Pauli error estimation plays a central role in diagnosing and modeling errors in present-day experimental quantum setups. As such, we are practically motivated to design simple algorithms that do not involve entangling operations; i.e., algorithms that prepare unentangled states, pass them through the channel, and perform unentangled measurements (followed by classical postprocessing).  
Moreover, for relevance to real-world quantum implementations, it has long been considered  desirable~\cite{knill2008randomized} to develop algorithms that are robust against \emph{state preparation and measurement} (SPAM) errors.  See, e.g., \cite{chen2023learnability} for more on practical aspects of Pauli channel estimation in the presence of SPAM.

\subsection{Our result}
In this work, we give the first entanglement-free Pauli error estimation algorithm that is robust to \rev{SPAM errors modeled as depolarizing noise (see \Cref{sec:spam} for a precise definition). When SPAM is severe, the sample complexity is subexponential in~$n$, improving on previous SPAM-tolerant algorithms that required exponentially many samples; when SPAM is mild, the complexity reduces to $\poly(n/\eps)$, essentially matching the SPAM-free setting.}  Here we briefly describe the setting (see \Cref{sec:prelims,sec:spam} for more complete definitions), and then state our main theorem.

\paragraph{Setup: the channel.}  We assume access to an $n$-qubit Pauli channel $\mathcal{E}$ with unknown error rates~$\pi$.  That is, $\pi$ is a probability distribution on $\{I, X, Y, Z\}^n$, and $\mathcal{E}$ may be modeled as drawing $\bP \sim \pi$ and then applying the $n$-qubit Pauli unitary~$\bP$.\footnote{We use \textbf{boldface} to denote random variables.}  In fact, for a completely general $n$-qubit channel, there is a natural notion of its ``Pauli error rates'', arising from the technique of randomized compiling~\cite{wallman2016noise}.  Our algorithm can easily be extended to learn Pauli error rates in this more general setting too; see \Cref{rem:rc}.

\paragraph{Setup: SPAM.}  We only consider algorithms that employ single-qubit state preparation and single-qubit measurements.  Moreover, we assume these are subject to SPAM errors, governed by  ``retention parameters'' $\rprep,\; \rmeas \in [0,1]$, known to the algorithm (via prior calibration). As discussed in \Cref{sec:spam}, several different natural operational interpretations of SPAM are all mathematically equivalent to the following model: after a state is prepared, it is passed through the depolarizing channel \rev{$\rho \mapsto \rprep \cdot \rho + (1 - \rprep)\cdot\frac{1}{2}I$}; and, before it is measured, it is passed through the channel \rev{$\rho \mapsto \rmeas \cdot \rho + (1 - \rmeas)\cdot\frac{1}{2}I$}.
In particular, $\rprep = \rmeas = 1$ corresponds to no SPAM.

\bigskip

We may now state our main result:

\begin{theorem}\label{thm:mainresult}
    In the above setup, let $r = \rprep \cdot \rmeas$ be the overall SPAM retention parameter, and write $r = 1 - \delta$, which may be thought of as the overall rate of SPAM.
    Let a precision parameter $0 < \eps \leq 1/2$ be given, and assume $\delta \leq 0.99$.
    Then there is an algorithm with the following properties:
    \begin{itemize}
        \item For $m = \begin{cases}
            \rev{\poly(n/\eps) \cdot}\exp\left(O\left((\delta n)^{1/3}\ln^{2/3}(1/\epsilon)\right)\right) & \text{if } \delta \geq 
            \rev{\delta_c } \\
            \poly(n/\eps)
            &\text{if } \delta < \rev{\delta_c }
        \end{cases}$, \rev{where $\delta_c = \frac{6}{\pi^3} \cdot \frac{\ln(1/\epsilon)}{n}$}, it prepares $m$ unentangled $n$-qubit states, where each qubit is chosen uniformly at random from $\{\ket{0}, \ket{i}, \ket{+}\}$.
        
        \item After passing these through the channel, it performs unentangled measurements on the resulting qubits, in only the $\{\ket{0}, \ket{1}\}$,  $\{\ket{i}, \ket{-i}\}$, and  $\{\ket{+}, \ket{-}\}$ bases.
        \item After efficient classical postprocessing, it returns an estimate of the Pauli error rates $\Tilde{\pi}$ such that $\Vert \Tilde{\pi} - \pi \Vert_\infty \leq \epsilon$ with high probability.
    \end{itemize} 
\end{theorem}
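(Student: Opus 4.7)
The plan is to extend the reduction from Pauli error estimation to noisy Population Recovery of \cite{FO21paulierror} so that it accommodates arbitrary SPAM, and then to adapt the complex-analytic estimator design of \cite{polyanskiy2017sample, DOS17populationrecovery} to the combined noise channel that arises.

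First, I would describe the per-trial observation. Each trial samples a basis $\mathbf{y}_j \in \{X,Y,Z\}$ independently and uniformly at each site, prepares the corresponding $+1$-eigenstate, passes it through the channel, and measures in the same basis. Conditioned on the channel applying $\bP \sim \pi$, the noiseless outcome at site $j$ is $+1$ iff $\bP_j$ commutes with $\mathbf{y}_j$; under the symplectic encoding of Paulis as elements of $\mathbb{F}_2^2$ used in \cite{FO21paulierror}, this outcome is a random $\mathbb{F}_2$-linear function of the symplectic code of $\bP_j$. Each SPAM insertion $\rho \mapsto r\rho$ is operationally equivalent to replacing the qubit by the maximally mixed state with probability $1-r$, so each observed bit is additionally passed through a binary symmetric channel of flip probability $(1-r)/2$ with $r = \rprep\rmeas$. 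Combined with the basis-induced ``erasure'' of the symplectic bits already present in the SPAM-free case, this realizes Population Recovery on a combined erasure/bit-flip channel, with effective noise parameters known to the algorithm since $\rprep, \rmeas$ are known.

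Next, for each target Pauli $P$, I would expand the indicator $\mathbf{1}[\bP = P]$ in the Walsh--Hadamard basis over (basis, outcome)-pairs. The resulting unbiased estimator attenuates the signal by a factor $\alpha < 1$ per non-identity site of $P$, where $\alpha = \alpha(r)$; for heavy Paulis $|P| = \Theta(n)$ this gives variance $\sim \alpha^{-2n}$, forcing exponential sample complexity if used naively. Following \cite{polyanskiy2017sample, DOS17populationrecovery}, the remedy is to construct a univariate polynomial $q$ of low degree satisfying $q(1) = 1$ but vanishing on the ``wrong-site'' attenuation locus, while minimizing $\sum_k |q_k|$, which controls the variance of the associated polynomial-estimator. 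The minimum achievable degree of such $q$ then dictates the sample complexity, and matching this to the claimed $\exp\bigl(O((\delta n)^{1/3}\log^{2/3}(1/\epsilon))\bigr)$ is the heart of the argument.

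The main obstacle is extending the complex-analytic extremal estimates of \cite{polyanskiy2017sample, DOS17populationrecovery} from pure erasure to the combined erasure/bit-flip channel here. Those works build Chebyshev-type polynomials on an interval arising from pure erasure; under SPAM the relevant locus in $\mathbb{C}$ is shifted and dilated in a $\delta$-dependent way. I would map the new locus back to a canonical interval by a linear change of variable, apply the Chebyshev extremal-polynomial estimates there, and translate the bounds back, tracking the $\delta$-dependence carefully. The two-regime split in the theorem statement reflects the threshold at which the SPAM-induced per-Pauli attenuation on weight-$n$ strings drops below $\epsilon$: when $\delta \ll \log(1/\epsilon)/n$, a degree-$O(\log(1/\epsilon))$ Newton-style interpolant suffices and avoids the complex-analytic overhead, giving polynomial-in-$n$ complexity. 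Finally, per-trial Hoeffding plus a union bound over the $4^n$ Paulis (contributing an additive $n$ inside the polynomial-degree factor) delivers $\|\tilde\pi - \pi\|_\infty \le \epsilon$ with high probability and completes the proof.
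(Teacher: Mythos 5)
Your reduction to Population Recovery on a combined erasure/bit-flip channel matches the paper's (cf.\ \Cref{thm:combo}), but the central analytic step of your plan would fail. After passing to generating functions, the quantity controlling the sample complexity is
\[
\min_{Q}\ \max_{-\pi<\theta\le\pi}\ \Bigl(1+\tfrac{1-\nu_2^2}{\nu_2^2}\sin^2(\theta/2)\Bigr)^{-n/2}\bigl|Q\bigl((1-\nu_1)+\nu_1 e^{i\theta}\bigr)\bigr|,
\]
where $Q$ ranges over polynomials with $Q(0)>\epsilon$, $Q(1)=0$, and $\ell_1$-bounded coefficients (\Cref{thm:eta_to_complex_circles}). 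The locus on which $|Q|$ must be bounded below is a short \emph{arc of the circle} $\partial D_{\nu_1}(1-\nu_1)$ near $1$, not an interval, and the normalization is the value of $Q$ at the interior point $0$ together with a coefficient bound. No affine change of variable turns this into a Chebyshev problem on an interval; and if you substitute $w=(z-(1-\nu_1))/\nu_1$ to move to the unit circle, the constraints $\sum_j|c_j|\le 2$ and $|c_0|>\epsilon$ are not preserved, so the Borwein--Erd\'elyi bound \cite{BE97} --- which is what \cite{polyanskiy2017sample,DOS17populationrecovery} actually use for the BSC, rather than interval Chebyshev estimates --- cannot be invoked off the shelf. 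The missing idea is precisely the paper's main technical contribution: a generalization of that theorem from the unit circle to circles $\partial D_{\nu_1}(1-\nu_1)$ (\Cref{thm:BE3.1}), proved by re-running the Hadamard three-lines argument on the lens-shaped regions $I_t$ between the chord $I_0$ and the arc $I_{\theta/2}$, together with a maximum-modulus step on an auxiliary circle centered at $\frac{1}{4M}$ (\Cref{lem:BE4.3}, \Cref{lem:BE4.1}, \Cref{cor:BE4.1}).

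Two smaller points. The low-SPAM regime is not handled by a separate Newton-style interpolant: the bound $\epsilon^{O(1/\nu_1)}$ falls out of the same generalized theorem by taking the arc angle $\theta_0=\pi/2$ when the optimal $\theta_0$ would exceed it. And a union bound over all $4^n$ Paulis is incompatible with ``efficient classical postprocessing''; the paper instead uses the branch-and-prune reduction from Population Recovery to Individual Recovery (\Cref{prop:individ}), which costs only a factor $O(n\log(n/\epsilon))$ and only ever examines prefixes of non-negligible mass.
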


\paragraph{Algorithm overview.} At a high level, the algorithm proceeds in three stages.

\emph{Stage 1: Data collection.} Alice repeatedly prepares random unentangled $n$-qubit states (with each qubit chosen uniformly from $\{\ket{0}, \ket{i}, \ket{+}\}$), passes them through the channel, and measures each qubit in the corresponding basis (\Cref{def:probe}).

\emph{Stage 2: Reduction to Population Recovery.} Using the classical reduction of~\cite{FO21paulierror} (\Cref{sec:recap}), each such measurement round produces a noisy binary string that can be viewed as a sample from the Pauli error distribution $\pi$ passed through a binary $Z$-channel $\Xi$ (\Cref{def:xi}). Under SPAM (\Cref{sec:spam}), this becomes the combination of a binary symmetric channel and a binary $Z$-channel, denoted as $\textrm{BSC}_{\delta/2} \circ \textrm{Z}_{1/3}$. A branch-and-prune reduction (\Cref{prop:individ}) further reduces the full Population Recovery problem to Individual Recovery: estimating $\pi(B)$ for a given string $B$.

\emph{Stage 3: Individual Recovery.} The Individual Recovery algorithm uses the methodology of \cite{DOS17populationrecovery}, whose sample complexity is controlled by the quantity $\eta(\eps)$ that we lower-bound in \Cref{sec:lowerbounding} via new complex-analytic techniques (\Cref{thm:etafinalbound}).

\begin{figure}[ht]
\centering
\begin{tikzpicture}[
    node distance=0.5cm,
    box/.style={draw, rounded corners, minimum height=0.9cm, minimum width=2cm, align=center, font=\small},
    arr/.style={-{Stealth[length=5pt]}, thick}
]
\node[box] (prep) {Prepare random\\product states};
\node[box, right=of prep] (spam1) {SPAM\\(state prep)};
\node[box, right=of spam1] (channel) {Pauli\\channel};
\node[box, right=of channel] (spam2) {SPAM\\(meas.)};
\node[box, right=of spam2] (meas) {Measure in\\eigenbasis};

\draw[arr] (prep) -- (spam1);
\draw[arr] (spam1) -- (channel);
\draw[arr] (channel) -- (spam2);
\draw[arr] (spam2) -- (meas);

\node[below=0.7cm of channel, font=\small\itshape, align=center] (equiv) {Equivalent to noisy samples through:};

\node[box, below=0.3cm of equiv] (zchan) {$\mathrm{Z}_{1/3}$};
\node[box, right=of zchan] (bsc) {$\mathrm{BSC}_{\delta/2}$};
\node[left=0.3cm of zchan, font=\small] {Channel:};
\draw[arr] (zchan) -- (bsc);

\node[box, below=0.7cm of zchan, minimum width=4.5cm] (indiv) {Individual Recovery};
\node[box, below=0.5cm of indiv, minimum width=4.5cm] (output) {Branch-and-prune};

\draw[arr] (bsc.south) -- ++(0,-0.25) -| (indiv.north);
\draw[arr] (indiv) -- (output);
\end{tikzpicture}
\caption{Schematic of the algorithm pipeline.}
\label{fig:pipeline}
\end{figure}

Notably, even in the presence of \textit{low} SPAM error, which corresponds to the case where \rev{$\delta < \delta_c$}, our algorithm achieves sample complexity and runtime $\poly(n/\eps)$. This matches, up to polynomial factors, the bounds previously established for the SPAM-free setting. The algorithm presented in \cite{FO21paulierror}, which assumes no SPAM error, has overall runtime $O(n \log(n/\epsilon) \cdot \epsilon^{-3})$ and sample complexity $m = O(\epsilon^{-2})\log(n/\epsilon)$.

We remark that although $m$ is subexponential in $n$, a dependence of $\exp(n^{1/3})$ seems reasonably plausible in real-world settings with $50 \leq n \leq 1000$, whereas prior work's dependence of~$2^{\Omega(n)}$ is infeasible in such settings.  Moreover, as we discuss below, there is strong evidence that \emph{any} Pauli error estimation algorithm that tolerates even a tiny amount of SPAM must use the channel on the order of $\exp(n^{1/3})$ times.

\subsection{The connection with Population Recovery, and the difficulty of SPAM}

Let us first recall how the Pauli error estimation problem would be solved if: (i)~entangling operations were allowed; (ii)~perfect state preparation and measurement is assumed.  In this case, there is a simple algorithm based on superdense coding~\cite{bennett1992communication}:  Prepare $n$ Bell pairs in registers $A_1B_1, \dots, A_n B_n$, pass $A_1 \cdots A_n$ through the channel, and then measure each $A_iB_i$ pair in the Bell basis. It is easy to show that if the channel applies $n$-bit Pauli operator $\bP$ (which it does with probability~$\pi(\bP)$), then $\bP \in \{I, X, Y, Z\}^n$ can be perfectly ``read off'' from the measurement outcomes.
Thus with entangling operations allowed and no SPAM, the problem reduces to a classical task: learning an unknown probability distribution on~$[4]^n$ to $\ell_\infty$-precision~$\eps$, which is easily done with $O(1/\eps^2)$ samples~\cite{canonne2020short}.

%Recent work of Flammia and O'Donnell~\cite{FO21paulierror} (discussed further below\ryan{is it?}) showed that in the SPAM-free setting, almost the same result can be obtained in the with \emph{no} entangling operations.  Their method reduced the problem to a form of the \emph{Population Recovery} problem~\cite{wigderson2016population}, the classical task of learning a probability distribution on $\{0,1\}^n$ in which the samples are passed through a (classical) noisy channel before being revealed to the learner.  Specifically,  \cite{FO21paulierror} were able to analyze and employ a Population Recovery algorithm for the ``Z-channel'' (similar to the binary erasure channel) with crossover probability $\frac13$''.

Suppose now that we consider a more realistic setting, with a slight amount of state preparation or measurement noise (or both).  In this setting, \emph{even if we allow entangling operations}, the problem becomes much more difficult. For example, suppose we are even \emph{promised} that the Pauli channel is only supported on $\{I, X\}^n$ Pauli strings.  In this case, the superdense coding strategy is unnecessary: without SPAM, we could simply repeatedly pass $\ket{0}^{\otimes n}$ through the channel, measure in the $\{\ket{0}, \ket{1}\}$ basis, and be reduced to the classical task of learning an unknown probability distribution on $\{\textsf{0},\textsf{1}\}^n$.  But suppose there is some small constant probability $p > 0$ that each measurement outcome is flipped (with a similar observation being possible for state preparation error).  Then we precisely have an instance of the classical \emph{Population Recovery} problem~\cite{dvir2012restriction,wigderson2016population}: learning a probability distribution on $\{0,1\}^n$ when the samples are passed through a binary symmetric channel with bit-flip probability~$p$.  For this task it is known~\cite{polyanskiy2017sample,DOS17populationrecovery} that \emph{any} algorithm that succeeds must use at least $\exp(\Omega(n^{1/3} \ln^{2/3}(1/\eps))$ samples.  Thus even in this essentially classical version of the Pauli error estimation problem, any slight positive rate of SPAM error means that one cannot asymptotically improve on the ``$m$'' in our \Cref{thm:mainresult}.

\subsection{Prior work, and our improvements}

A main prior work on SPAM-tolerant Pauli error estimation is by Flammia and Wallman~\cite{FlammiaWallman}.  Among other results, they gave a SPAM-tolerant Pauli error estimation algorithm that succeeds in the harder task of estimating the Fourier spectrum of the Pauli error rates.  Its total complexity, however is $\Tilde{O}(\frac{1}{\Delta} \cdot 2^n/\eps^2)$, where $\Delta$ is the ``spectral gap''.  In addition to having fully exponential complexity in~$n$, the dependence on~$\Delta$ may be unfavorable, as this quantity may be arbitrarily small, or even $0$ for some simple channels.
Some improvements to this result were made in~\cite{chen2022quantum}, at the cost of adding entanglement with $n$ ancilla qubits into the protocol.

\rev{Another relevant prior work is by Harper, Yu, and Flammia~\cite{HYF21}, who proposed a SPAM-robust Pauli estimation protocol under a specific noise model. Their approach assumes that the support of the Pauli channel is chosen uniformly at random, and measures sample complexity in terms of queries to an ``eigenvalue oracle'' with Gaussian noise, which can only heuristically be approximated by quantum measurements. By contrast, our protocol makes no assumptions on the support of the Pauli channel and uses standard quantum state preparations and measurements. The complexity scaling also differs: their method achieves polynomial complexity under their specific assumptions, whereas our $\exp(n^{1/3})$ scaling applies to the fully general (depolarizing) SPAM setting.}

By contrast, our approach builds on the subsequent paper by Flammia and O'Donnell~\cite{FO21paulierror}, which was the first to make a connection between Pauli error estimation and Population Recovery.  This work gave a  novel method, using \emph{no} entangling operations, that in the SPAM-free setting reduced Pauli error estimation to another form of Population Recovery: learning a probability distribution on $\{\textsf{0},\textsf{1}\}^n$ in which the samples are passed to the learner through a ``Z-channel'' with crossover probability~$\frac13$ (see \Cref{def:Z}). Z-channels had not previously been studied in the context of Population Recovery, but it was observed in \cite{FO21paulierror} that prior Population Recovery algorithms for the binary erasure channel  apply equally well for the Z-channel.  By using such prior algorithms~\cite{dvir2012restriction,moitra2013polynomial}, Flammia--O'Donnell obtained the same result \Cref{thm:mainresult} but with two differences:
\begin{itemize}
    \item they used the channel only $m = O(\log(n/\epsilon)/\epsilon^2)$ times;
    \item however, they assumed no SPAM errors.\footnote{Actually, \cite{FO21paulierror} were able to tolerate a very limited form of ``measurement error''; specifically, one where each measurement is either perfect, or else reports ``error'' with probability at most~$\frac14$.  They showed that such special measurement errors had the effect of increasing the Z-channel crossover probability up to~$\frac12$, the threshold below which prior BEC Population Recovery algorithms worked just as efficiently.} 
\end{itemize}

In this work, we first show that for the very general SPAM model we allow in \Cref{thm:mainresult}, the Pauli error estimation problem can be reduced to yet another variant of Population Recovery: namely, one that \emph{combines} $\frac13$-rate Z-channel noise with $\delta$-rate BSC noise.  This kind of channel is more general than all the ones previously studied for Population Recovery.  Then, the main effort is to give a Population Recovery algorithm for this general channel; for this, we need to extend the approach from~\cite{polyanskiy2017sample,DOS17populationrecovery}, which in turn requires generalizing certain complex analysis theorems from~\cite{BE97}.

\section{Preliminaries} \label{sec:prelims}
\subsection{Quantum preliminaries}

\begin{definition}
The four \textit{1-qubit Pauli operators} are defined as  
\begin{equation}
    \sigma_0 = \begin{pmatrix}
    1 & 0 \\ 
    0 & 1 
\end{pmatrix}, \quad \sigma_1 = \sigma_x= \begin{pmatrix}
    0 & 1 \\ 
    1 & 0 
\end{pmatrix}, \quad \sigma_2 = \sigma_y= \begin{pmatrix}
    0 & -i \\ 
    i & 0 
\end{pmatrix}, \quad \sigma_3 = \sigma_z= \begin{pmatrix}
    1 & 0 \\ 
    0 & -1 
\end{pmatrix},
\end{equation}
with $\sigma_0$ being  the identity operator, and $\sigma_1, \sigma_2$, and $\sigma_3$ corresponding to  rotations on the Bloch sphere by $180^\circ$ degrees about the $x$- $y$-, and $z$-axes, respectively. More generally, an \emph{$n$-qubit Pauli operator} is defined as $\sigma_{C} = \bigotimes_{i=1}^n \sigma_{C_i}$, where $C$ denotes a string in $\{0, 1, 2, 3\}^n$. 
\end{definition}
\begin{definition}
    We define the operation $\oplus$ on $\{0,1,2,3\}$ by $i \oplus j = k$ iff $\sigma_i \sigma_j = \sigma_k$ (up to phase).  Equivalently, $\oplus$ is bitwise xor when $\{0,1,2,3\}$ are regarded as $2$-bit base-$2$ numbers.  We extend $\oplus$ to operate coordinatewise on $\{0,1,2,3\}^n$.
\end{definition}
% \begin{notation}
%     Given two strings $B, C \in \{0, 1, 2, 3 \}^n$, we define the string $C^{\neq B} \in \{\textsf{0},\textsf{1}\}^n$ by
%     \[ (C^{\neq B})_j = \begin{cases}
%     \textsf{1} & \text{if } C_j \neq B_j, \\
%     \textsf{0} & \text{if } C_j = B_j.
% \end{cases} \]
% \end{notation}

\begin{definition}
An \textit{n-qubit Pauli channel} is defined as 
\begin{equation} \rho \mapsto \sum_{C \in \{0, 1, 2, 3\}^n} \pi(C) \cdot \sigma_C\rho\sigma_C^{\dagger}, \end{equation}
where $\rho$ denotes the input $n$-qubit quantum state, and $\pi$ is a probability distribution on $\{0, 1, 2, 3\}^n$. We refer to the parameters $\pi(C)$ as the \emph{Pauli error rates}.
\end{definition}

\begin{definition}
The orthonormal eigenbasis for the Pauli operator $\sigma_1 = \sigma_x$ will be denoted by $\ket{\chi_+^1}, \ket{\chi_-^1}$. Note that this basis corresponds to the two unit vectors on the Bloch sphere pointing along the positive/negative $x$-axis, denoted by $\ket{+}, \ket{-}$ respectively.
Similarly, the eigenbasis for $\sigma_2 = \sigma_y$ is $\ket{\chi_+^2}, \ket{\chi_-^2}$, aka $\ket{i}, \ket{-i}$; and, the eigenbasis for $\sigma_3 = \sigma_z$ is $\ket{\chi_+^3}, \ket{\chi_-^3}$, aka $\ket{0}, \ket{1}$ respectively. 
\end{definition}

% \begin{notation}
%     For a pure $1$-qubit state $\ket{a}$, we the notation $\vec{a} \in \R^3$ to denote its Bloch sphere representation.  Similarly, for a mixed $1$-qubit state $\rho$, we use the notation $\vec{\rho}$ for its Bloch ball representation.
% \end{notation}

% \begin{definition}
%     Recall that if $\ket{a}$ is a pure state and $\rho$ is a mixed state, then their \emph{fidelity} is $\braket{a|\rho|a}$. In terms of the Bloch representation, this equals $\frac12(1 + \vec{a} \cdot \vec{\rho})$.  It also equals the probability of reading out ``$a$'' when measuring~$\rho$ in an orthonormal basis including~$\ket{a}$.
% \end{definition}

\subsection{Population Recovery and classical channels}
We first recall communication channels, in the sense of classical information theory:
\begin{definition}
    Let $\Sigma, \Gamma$ be finite alphabets.  A \emph{classical channel} is a stochastic map $\Lambda : \Sigma \to \Gamma$.
    We extend to $\Lambda : \Sigma^n \to \Gamma^n$ by letting $\Lambda$ act independently on each symbol.
\end{definition}

The most typically studied examples are the following, both with $\Sigma = \{\textsf{0},\textsf{1}\}$:
\begin{itemize}
    \item the \emph{Binary Symmetric Channel} with crossover probability~$p$ (abbreviation: BSC${}_p$), where  $\Gamma = \{\textsf{0},\textsf{1}\}$ and $\Lambda$ flips the bit with probability~$p$;
    \item the \emph{Binary Erasure Channel} with erasure probability~$p$ (abbreviation: BEC${}_p$), where $\Gamma = \{\textsf{0},\textsf{1}, \textsf{?}\}$ and $\Lambda$ preserves each bit with probability $1-p$ and replaces it by $\textsf{?}$ with probability~$p$.
\end{itemize}

\cite{FO21paulierror} also showed the relevance of the following channel for Pauli error estimation:
\begin{definition} \label{def:Z}
    The \emph{Z-channel} with crossover probability~$p$ (abbreviation: Z$_{p}$) has $\Sigma = \Gamma = \{\textsf{0}, \textsf{1}\}$; it always maps $\textsf{0}$ to $\textsf{0}$, and it maps $\textsf{1}$ to $\textsf{0}$ with probability~$p$.
\end{definition}

Now we recall the classical  problem of learning a distribution from noisy data, introduced in~\cite{dvir2012restriction,wigderson2016population} under the name ``Population Recovery'': 
\begin{definition}
    Let $\Lambda : \Sigma \to \Gamma$ be a classical channel.  In the associated \textit{Population Recovery} problem, the goal is to learn an unknown probability distribution $\calD$ on $\Sigma^n$ to a given $\ell_\infty$ precision of $\eps > 0$ (with high probability).  
    The model is that the learner can get independent samples from $\Gamma^n$ distributed as $\Lambda(\bx)$, where $\bx \sim \calD$.
\end{definition}
Population Recovery has been well studied for the BSC and BEC channels (see, e.g.,~\cite{dvir2012restriction,
wigderson2016population,batman2013finding,moitra2013polynomial,lovett2015improved,lovett2017noisy,de2016noisy,polyanskiy2017sample,DOS17populationrecovery}), as well as for the binary deletion channel.

A simplified version of Population Recovery is the following:
\begin{definition}
    In the \emph{Individual Recovery} variant of Population Recovery, a particular string $b \in \Sigma^n$ is given and the only task is to estimate $\calD(b)$ to additive precision~$\eps$ (with high probability).
\end{definition}
It is well known~\cite{dvir2012restriction,polyanskiy2017sample} that  Population Recovery efficiently reduces to Individual Recovery via a branch-and-prune strategy; e.g., Section 4.2 in \cite{FO21paulierror} shows the following:
\begin{proposition} \label{prop:individ}
    For a given channel on constant-sized alphabets, there is a reduction from Population Recovery to Individual Recovery that loses only an $O(n \log(n/\eps))$ factor in terms of runtime and number of experiments.
\end{proposition}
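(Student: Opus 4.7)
The plan is the classical branch-and-prune reduction. I maintain, for each $k = 0, 1, \ldots, n$, a short list $L_k \subseteq \Sigma^k$ of length-$k$ prefixes that could extend to strings of mass $\geq \eps$ in $\calD$, starting with $L_0$ being the singleton containing the empty string. To go from level $k$ to level $k+1$, for every $u \in L_k$ and every $c \in \Sigma$ I invoke Individual Recovery to estimate the prefix marginal $p_{uc} := \Pr_{\bx \sim \calD}[(\bx_1, \ldots, \bx_{k+1}) = uc]$ to additive precision $\eps/4$, and I include $uc$ in $L_{k+1}$ iff the resulting estimate $\hat p_{uc}$ is at least $\eps/2$. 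The final output is $\widetilde{\calD}(v) = \hat p_v$ for $v \in L_n$ and $0$ otherwise. Crucially, because $\Lambda$ acts independently across coordinates, samples from $\Lambda$ applied to any prefix marginal of $\calD$ are obtained for free by truncating samples of $\Lambda$ applied to $\calD$, so a single batch of samples suffices for every Individual Recovery call at every level.

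For correctness, assume all IR estimates are $\eps/4$-accurate. If $\calD(v) \geq \eps$, then every prefix of $v$ has marginal probability $\geq \eps$, so by induction on $k$ every prefix of $v$ survives the pruning and $v \in L_n$ with $|\hat p_v - \calD(v)| \leq \eps/4$. Conversely, if $v \notin L_n$, then some prefix of $v$ was pruned with estimate $< \eps/2$, forcing true marginal $< 3\eps/4$; hence $\calD(v) < 3\eps/4$, and assigning $\widetilde{\calD}(v) = 0$ loses less than $\eps$. Thus $\|\widetilde{\calD} - \calD\|_\infty \leq \eps$. For efficiency, every $u \in L_k$ has true marginal $\geq \eps/4$ and the disjoint length-$k$ prefix marginals sum to at most $1$, so $|L_k| \leq 4/\eps$; the total number of IR calls across all levels is therefore $\sum_{k=0}^{n-1} |L_k| \cdot |\Sigma| = O(n/\eps)$ for constant-size $\Sigma$. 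Boosting each IR call's failure probability from a constant down to $O(\eps/n)$ via the standard median trick costs an $O(\log(n/\eps))$ factor in samples, and a union bound over the $O(n/\eps)$ actually-made calls yields simultaneous correctness of all estimates.

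The main subtlety is the interaction between sample reuse and the adaptive choice of which IR calls to make (since $L_{k+1}$ depends on estimates drawn from the same batch that later levels will also use). I sidestep this by imposing the deterministic cap $|L_k| \leq 4/\eps$, e.g.\ by retaining only the top $4/\eps$ candidates by $\hat p$ after each pruning step, which bounds the total number of IR calls by $O(n/\eps)$ unconditionally and lets the union bound go through without any independence concerns. Combining sample reuse with the $O(\log(n/\eps))$-factor boosting then yields the claimed $O(n \log(n/\eps))$-factor overhead on top of one Individual Recovery call.
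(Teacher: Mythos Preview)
The paper does not prove this proposition itself; it cites \cite{FO21paulierror} and names the method as the ``branch-and-prune strategy''. Your outline is exactly that standard reduction, so the approaches match.

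There is one soft spot in your handling of adaptivity. Capping $|L_k|$ at $4/\eps$ bounds the \emph{number} of IR calls but not their \emph{identity}: which prefixes land in $L_k$ still depends on the same batch you intend to reuse at later levels, so you cannot simply assert that ``the $i$-th call made is inaccurate with probability at most $\delta$'' --- that event entangles the random choice of query with the random estimate computed from the same data. Two standard repairs work. First, draw a fresh batch of samples at each of the $n$ levels; then $L_{k-1}$ is independent of the level-$k$ batch, the union bound over the at most $O(|\Sigma|/\eps)$ children of $L_{k-1}$ is valid conditionally, and the factor~$n$ in the claimed $O(n\log(n/\eps))$ overhead appears directly. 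Second, if you want to keep a single batch, union-bound instead over the deterministic set of length-$k$ strings all of whose proper prefixes have true marginal mass at least $\eps/4$ (there are at most $4|\Sigma|/\eps$ of these per level, hence $O(n/\eps)$ total), and verify by induction on~$k$ that when all those estimates are $\eps/4$-accurate the algorithm's $L_k$ never leaves this set. Either fix completes your argument; note incidentally that if your sample-reuse claim held as written, the sample overhead would be only $O(\log(n/\eps))$, not $O(n\log(n/\eps))$, so your final accounting line is also slightly off.
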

% Often there is a further reduction:
% \begin{remark} \label{rem:000}
%     If $\calD$ is a probability distribution on $\Sigma = \{\textsf{0}, \textsf{1}\}$, and $u \in \{\textsf{0}, \textsf{1}\}$, let us write $\calD^{\oplus u}$ for the distribution given by drawing $\bx \sim \calD$ and outputting the bitwise xor $\bx \oplus u$.
%     For some binary noise channels (e.g., BSC and BEC), it is easy for the learner to simulate (noisy) samples from $\calD^{\oplus u}$ given its (noisy) samples from $\calD$, for any given $u \in \{\textsf{0},\textsf{1}\}^n$.  In this case, Individual Recovery reduces to the even simpler task of estimating the probability mass of $\calD$ on the all-$\textsf{0}$'s string to additive error~$\epsilon$.
% \end{remark}

\subsection{Complex analysis}\label{subsec:complex_analysis}
We will make use of the following classical result from complex analysis in Section~\ref{sec:lowerbounding}.

\begin{fact}[Maximum Modulus Principle]\label{fact:mmp}
Let $U \subseteq \C$ be a bounded open region and let $g$ be a function analytic on $U$ and continuous on the closure $\overline{U}$. Then $|g|$ attains its maximum on the boundary $\partial U$; that is,
\[ \max_{z \in \overline{U}} |g(z)| = \max_{z \in \partial U} |g(z)|. \]
For a reference, see e.g.~\cite[Ch.~VI]{conway1978}.
\end{fact}

\subsection{Recapping \cite{FO21paulierror}} \label{sec:recap}

We use the algorithmic plan for Pauli error estimation from \cite{FO21paulierror}. Following their exposition, two characters are introduced: Alice the learner, and Charlie the ``channel operator''. In particular, we think of Charlie as implementing the Pauli channel with error rates~$\pi$ according to the following process: When Alice wants to pass a state $\ket{\psi_A}$ through the channel, she sends it to Charlie, who picks a random~$\bC \sim \pi$, and then returns $\sigma_{\bC} \ket{\psi_A}$ to Alice.
In fact, our algorithm will only have Alice passing random $n$-qubit states composed of $\ket{+}, \ket{i}, \ket{0}$ through the channel.  Thus the setup can be summarized as below:
\begin{definition} \label{def:probe}
    To learn a Pauli channel defined by error rates~$\pi$, Alice repeatedly performs ``\textit{random nontrivial probes}'' as follows:
    \begin{enumerate}
    \item Alice picks a uniformly random string $\bA \in \{1, 2, 3\}^n$.
    \item Alice prepares an unentangled $n$-qubit state $\ket{\psi_{\bA}}$, where the $j$th qubit is $\ket{\chi_+^{{\bA}_j}}$.
    \item Alice sends $\ket{\psi_{\bA}}$ to Charlie. Charlie draws a string $\bC \in \{0, 1, 2, 3\}^n$ according to~$\pi$, and sends back $\sigma_{\bC}\ket{\psi_{\bA}}$.
    \item Alice performs an unentangled measurement on $\sigma_{\bC}\ket{\psi_{\bA}}$, where the $j$th qubit is measured in the $\ket{\chi_{\pm}^{{\bA}_j}}$ basis. She obtains a \emph{readout} $\bR \in \{+, -\}^n$, which we can relabel to be from $\{\textsf{0}, \textsf{1}\}^n$.
\end{enumerate}
\end{definition}

In the description of \Cref{def:probe}, Alice picks her random string $\bA$ first, and then Charlie (independently) draws his random string~$\bC$ second. However, it is equivalent and more helpful to imagine Charlie ``secretly'' drawing $\bC$ first, and then Alice picking~$\bA$ next (without knowledge of~$\bC$); thus, we think of the probe based on $\bA$ as acquiring information about Charlie's outcome~$\bC = C$.  
With this interpretation in mind, it is not hard to see the  following fact~\cite[Fact~17]{FO21paulierror}: 
\begin{fact} \label{fact:13}
Fix an outcome $C \in \{0, 1,2, 3\}^n$ for Charlie's draw.  Now when Alice performs a random nontrivial probe, the coordinates of her readout $\bR$ are independent, and for each $1 \leq j \leq n$: 
\begin{enumerate}
    \item If $C_j = 0$, then $\bR_j = \textsf{\emph{0}}$ with probability~$1$.
    \item If $C_j \neq 0$, then with probability $\frac{1}{3}$ we have $\bR_j = \textsf{\emph{0}}$, and with probability $\frac{2}{3}$ we have $\bR_j = \textsf{\emph{1}}$.
\end{enumerate}
%    In other words, $\bR$ is distributed as the result of passing $C^{\neq 0^n}$ through a Z-channel with crossover probability~$\frac13$.
\end{fact}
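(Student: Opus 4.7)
The plan is to analyze each qubit $j$ in isolation, conditional on Charlie's fixed outcome $C$, and exploit the fact that $\bA_j$ is drawn independently and uniformly from $\{1,2,3\}$. Since Alice's preparation, Charlie's application of $\sigma_C = \bigotimes_j \sigma_{C_j}$, and Alice's measurement all factor as tensor products across qubits, and $\bA$ is drawn with independent coordinates, the joint distribution of $\bR$ manifestly factors. So it suffices to compute the marginal distribution of each $\bR_j$.

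Fix a coordinate $j$ and consider first the case $C_j = 0$. Then $\sigma_{C_j} = I$, so regardless of the choice of $\bA_j \in \{1,2,3\}$, the post-channel state is $\ket{\chi_+^{\bA_j}}$. Measuring this in the $\{\ket{\chi_+^{\bA_j}},\ket{\chi_-^{\bA_j}}\}$ basis deterministically yields the $+$ outcome, which is relabeled to $\textsf{0}$; this gives part~1.

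For part~2, suppose $C_j \in \{1,2,3\}$. I would split on whether $\bA_j = C_j$ or not. If $\bA_j = C_j$, then $\ket{\chi_+^{\bA_j}}$ is the $+1$ eigenvector of $\sigma_{C_j}$, so Charlie's operation leaves it unchanged and Alice's measurement returns $\textsf{0}$ with certainty. If $\bA_j \neq C_j$, then $\sigma_{\bA_j}$ and $\sigma_{C_j}$ are distinct nontrivial Pauli matrices, hence anticommute. Anticommutation implies $\sigma_{C_j}$ maps the $+1$ eigenspace of $\sigma_{\bA_j}$ into the $-1$ eigenspace; thus $\sigma_{C_j}\ket{\chi_+^{\bA_j}} = \alpha \ket{\chi_-^{\bA_j}}$ for some global phase $\alpha$, and Alice's measurement returns $\textsf{1}$ with certainty. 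Since $\bA_j$ is uniform over $\{1,2,3\}$, the event $\bA_j = C_j$ has probability $\tfrac{1}{3}$ and its complement has probability $\tfrac{2}{3}$, which gives the stated distribution.

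The only mildly nontrivial ingredient is the anticommutation fact $\sigma_a \sigma_b = -\sigma_b \sigma_a$ for distinct $a,b \in \{1,2,3\}$, but this is a standard property of the Pauli matrices that follows directly from the definitions in the preliminaries. Nothing in the argument is a genuine obstacle; the main thing to be careful about is the conceptual shift (emphasized in the paragraph preceding the fact) that we may condition on Charlie's choice before Alice picks her bases, which is what makes the per-qubit marginal computation legitimate.
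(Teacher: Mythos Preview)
Your proof is correct and follows essentially the same reasoning as the paper's brief parenthetical explanation: factor across qubits, observe that $\sigma_0$ fixes every $\ket{\chi_+^{\bA_j}}$, and for $C_j \neq 0$ split on whether $\bA_j = C_j$ (commuting, state fixed) or $\bA_j \neq C_j$ (anticommuting, state flipped to $\ket{\chi_-^{\bA_j}}$). Your writeup is simply a more explicit version of the paper's one-line justification, with the anticommutation argument spelled out.
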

\noindent (Briefly, this is because if Charlie will apply the identity Pauli $\sigma_0$ to a qubit, it will never flip from a $+1$ eigenvector to a $-1$ eigenvector; but, if Charlie will apply a non-identity Pauli~$\sigma$ to a qubit, there is a $\frac23$ chance that Alice chose its state to be a $-1$ eigenvector for~$\sigma$.)
%\noindent This fact holds because $\sigma_0$ preserves the qubits $\ket{\chi_+^1}, \ket{\chi_+^2},$ and $\ket{\chi_+^3}$, $\sigma_1$ preserves $\ket{\chi_+^1}$ and maps $\ket{\chi_+^2}, \ket{\chi_+^3}$ to $\ket{\chi_-^2}, \ket{\chi_-^3}$ respectively. Similarly, $\sigma_2$ preserves $\ket{\chi_+^2}$ and maps $\ket{\chi_+^1}, \ket{\chi_+^3}$ to $\ket{\chi_-^1}, \ket{\chi_-^3}$ respectively, and $\sigma_3$ preserves $\ket{\chi_+^3}$ and maps $\ket{\chi_+^1}, \ket{\chi_+^2}$ to $\ket{\chi_-^1}, \ket{\chi_-^2}$ respectively. Since Alice chooses each qubit in $A$ uniformly at random, this fact holds.

Another way to say \Cref{fact:13} is that given $C$, the readout~$\bR$ is distributed as $\Xi(C)$, where $\Xi$ is defined as follows:
\begin{definition} \label{def:xi}
    We define the classical channel
    \begin{equation}
        \Xi : \{0,1,2,3\} \to \{\textsf{0},\textsf{1}\}, \qquad 0 \mapsto \textsf{0}, \qquad 1,2,3 \mapsto \begin{cases}
            \textsf{0} & \text{with probability $\frac13$}, \\
            \textsf{1} & \text{with probability $\frac23$.}
        \end{cases}
    \end{equation}    
\end{definition}
\noindent Thus Alice's strategy of random nontrivial probes transforms the Pauli channel estimation task into the Population Recovery problem with channel~$\Xi$.

We next observe another fact, \cite[Observation~16]{FO21paulierror}:
\begin{fact}
    Fix any $B \in \{0,1,2,3\}^n$.  Let $C \in \{0,1,2,3\}^n$ be the outcome of Charlie's draw.  
    If Alice does a random nontrivial probe with $\bA \in \{1,2,3\}^n$, and then she flips each bit $\bR_j$ for which $\sigma_{\bA_j}$ and $\sigma_{B_j}$ anticommute, then $\bR$ is distributed as $\Xi(B \oplus C)$.
\end{fact}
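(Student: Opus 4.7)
The plan is to verify the claim coordinate-by-coordinate. Since the coordinates of $\bA$ are independent and, by \Cref{fact:13}, the conditional distribution of $\bR$ given $C$ is a product distribution, it suffices to show for each fixed $j$ and each choice of $B_j, C_j$ that the (possibly flipped) bit $\bR_j$ is distributed as $\Xi(B_j \oplus C_j)$. The single algebraic ingredient I would use is that for $a,b \in \{1,2,3\}$ the Paulis $\sigma_a$ and $\sigma_b$ commute iff $a = b$ (and otherwise anticommute), while $\sigma_0 = I$ commutes with everything.

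First I would dispense with the easy cases. If $B_j = 0$, no flip ever occurs and the conclusion is immediate from \Cref{fact:13}. If $B_j \neq 0$ but $C_j = 0$, then pre-flip $\bR_j = \textsf{0}$ deterministically and the flip fires precisely for the $2/3$ of uniformly random values of $\bA_j \in \{1,2,3\}$ that differ from $B_j$, so the post-flip bit is $\textsf{1}$ with probability $2/3$, matching $\Xi(B_j)$ as required.

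The only mildly delicate case is $B_j, C_j$ both nonzero. Here I would enumerate the three equally likely values of $\bA_j$, noting that pre-flip $\bR_j = \textsf{0}$ iff $\bA_j = C_j$ and that the flip fires iff $\bA_j \neq B_j$. If $B_j = C_j$ (so $B_j \oplus C_j = 0$), direct inspection of each of the three sub-cases shows $\bR_j = \textsf{0}$ always, agreeing with $\Xi(0)$. If instead $B_j \neq C_j$, let $D_j$ be the unique element of $\{1,2,3\} \setminus \{B_j, C_j\}$, so that $B_j \oplus C_j = D_j$; then one checks that the post-flip bit equals $\textsf{0}$ only in the sub-case $\bA_j = D_j$, giving probability exactly $1/3 = \Pr[\Xi(D_j) = \textsf{0}]$. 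There is no substantive obstacle: the argument is pure bookkeeping, made clean by identifying anticommutation within $\{1,2,3\}$ with inequality.
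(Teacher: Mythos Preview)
Your case-by-case verification is correct. The paper itself does not prove this fact; it simply imports it as \cite[Observation~16]{FO21paulierror}, so there is no ``paper's own proof'' to compare against.

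That said, a more uniform argument avoids the case split entirely. Observe that the pre-flip bit satisfies $\bR_j = \textsf{0}$ iff $\sigma_{\bA_j}$ commutes with $\sigma_{C_j}$, and the flip fires iff $\sigma_{\bA_j}$ anticommutes with $\sigma_{B_j}$. Hence the post-flip bit is $\textsf{0}$ iff $\sigma_{\bA_j}$ either commutes with both $\sigma_{B_j}$ and $\sigma_{C_j}$ or anticommutes with both, which is exactly the condition that $\sigma_{\bA_j}$ commutes with $\sigma_{B_j}\sigma_{C_j} \propto \sigma_{B_j \oplus C_j}$. This is precisely the pre-flip condition with $C_j$ replaced by $B_j \oplus C_j$, so \Cref{fact:13} applied to $B \oplus C$ finishes. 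Your enumeration is of course equivalent --- it just unrolls this commutator identity by hand --- but the algebraic route makes the ``shift by $B$'' structure manifest and scales without bookkeeping.
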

This means that for any $B \in \{0,1,2,3\}^n$ of Alice's choice, she can simulate the Population Recovery problem (with channel~$\Xi$) under the ``$B$-altered distribution'' $\pi^{\oplus B}$, defined by $\pi^{\oplus B}(C) = \pi(B \oplus C)$.
Finally, suppose that in this scenario she can efficiently estimate $\pi^{\oplus B}(0^n) = \pi(B)$ for $B$'s of her choice.  This means she can solve the Individual Recovery problem for $\pi$ under~$\Xi$, hence the general Population Recovery problem (by \Cref{prop:individ}).
\medskip

In summary, \cite{FO21paulierror} show that Pauli error estimation efficiently reduces to the following Individual Recovery task:  Given samples from $\Xi \circ \pi$, estimate $\pi(0^n)$ to additive precision~$\epsilon$.  Finally, it is easy to observe that in this scenario, there is no harm in merging the Pauli symbols $1,2,3$ into the bit $\textsf{1}$; this does not change $\pi(0^n)$, and it converts the channel~$\Xi$ into Z${}_\frac13$, the binary Z-channel with crossover probability~$\frac13$.
This completes the recap of how \cite{FO21paulierror} reduces (SPAM-free) Pauli error estimation to Individual Recovery of $\pi(\textsf{0}^n)$ under noise channel Z${}_\frac13$.

\section{Modeling SPAM} \label{sec:spam}
Operationally, SPAM errors might occur in a variety of natural ways.  To take state preparation errors for example, when an algorithm tries to prepare a certain state $\ket{a}$, several things might happen:  $\ket{a}$ might be replaced by a random pure state at a fixed angle from~$\ket{a}$; or, $\ket{a}$ might be replaced with a random mixed state $\boldsymbol{\rho}$ having a fixed fidelity with $\ket{a}$; or, $\ket{a}$ might be replaced with the maximally mixed state with a certain probability.
Mathematically, these are all equivalent to $\ket{a}$ being passed through a depolarizing channel.  We therefore make the following definition:
\begin{definition}  \label{def:sp}
    We model single-qubit state preparation error, with ``retention parameter'' $\rprep \in [0,1]$, by assuming that when $\ket{a}$ is intended, the actual produced state is $\rprep \cdot \ket{a} \hspace{-.3em}\bra{a} + (1 - \rprep)\cdot \frac12 I$, where $\frac12 I$ corresponds to the maximally mixed single-qubit state. Equivalently, the produced state has expected fidelity $\frac12 + \frac12 \rprep$ with the intended state.
\end{definition}

\begin{figure}[ht]
\centering
\begin{tikzpicture}[scale=1.1]
  % Contracted sphere (drawn first so it's behind)
  \draw[cyan, thick, fill=cyan!15] (0,0) circle (0.9);
  \draw[cyan, dashed] (0,0) ellipse (0.9 and 0.3);
  
  % Bloch sphere outline
  \draw[gray, thick] (0,0) circle (1.5);
  \draw[gray, dashed] (0,0) ellipse (1.5 and 0.5);
  
  % Axes
  \draw[->] (0,-1.7) -- (0,1.9) node[above] {$|0\rangle$};
  \draw[->] (-1.8,0) -- (1.8,0) node[right] {$|1\rangle$};
  
  % Original state
  \fill[blue] (0,1.5) circle (2pt);
  \node[blue, right, font=\small] at (0.05,1.65) {$|a\rangle$};
  
  % Depolarized state (contracted)
  \fill[red] (0,0.9) circle (2pt);
  \node[red, right, font=\small] at (0.1,1.05) {$r_{\mathrm{prep}} \cdot \vec{a}$};
  
  % Arrow showing contraction
  \draw[-{Stealth}, thick, purple!70] (0,1.4) -- (0,1.0);
  
  % Origin
  \fill[gray] (0,0) circle (1.5pt);
  \node[gray, below right, font=\footnotesize] at (0.0,-0.0) {$\tfrac{1}{2}I$};
  
  % Label
  \node[below, font=\small] at (0,-2.0) {Bloch vector contracts: $\vec{a} \mapsto r_{\mathrm{prep}} \cdot \vec{a}$};
\end{tikzpicture}
\caption{The depolarizing SPAM model on the Bloch sphere. All isotropic noise models (replacement with a random pure state at a fixed angle, replacement with a mixed state of fixed fidelity, or mixing with the maximally mixed state) are equivalent to contracting the Bloch vector by the retention parameter~$\rprep$, mapping pure states on the unit sphere to the interior of a smaller sphere of radius~$\rprep$.}
\label{fig:bloch_spam}
\end{figure}
\begin{remark}\label{rem:spam_assumptions}
    Our model assumes for simplicity that we have the same retention parameter for every intended state $\ket{a}$.  We also assume that the learning algorithm knows the parameter~$\rprep$ from prior calibration. The same remarks hold for \Cref{def:am} below.
\end{remark}

\begin{remark}\label{rem:spam_scope}
We emphasize that the equivalence between the various operational descriptions of SPAM listed above and the depolarizing model (\Cref{def:sp,def:am}) relies on an isotropy assumption: the noise is assumed to act symmetrically over all directions on the Bloch sphere (e.g.~the random perturbation of $\ket{a}$ is uniformly distributed over states at a given angle from $\ket{a}$). Accordingly, the robustness guarantee of our algorithm (\Cref{thm:mainresult}) is with respect to SPAM noise that is well-described by the depolarizing model, rather than arbitrary SPAM noise in full generality.

We also note that our model does not explicitly account for \emph{erasure errors}, which can be a significant source of error in some platforms such as neutral-atom and photonic systems. However, if the erasure event is heralded (i.e.~the learner knows when a qubit is lost), then the erased measurement outcomes can simply be discarded, reducing to our existing framework with a decrease in the effective sample size. Unheralded erasure errors, on the other hand, would require a different noise model and are not covered by our analysis.
\end{remark}
Similarly, for modeling measurement errors, there are several natural operational possibilities: when measuring $\rho$ against an intended outcome $\ket{a}$, one might actually get the result of measurement against some other random outcome $\ket{\boldsymbol{b}}$ making a certain angle from~$\ket{a}$; or, $\rho$ might get perturbed to a random state with a certain expected fidelity before the correct measurement is made, etc.  Once again, these possibilities are mathematically equivalent to $\rho$ being passed through the depolarizing channel.

\rev{To see this equivalence, note that any noise model in which the state $\rho$ is replaced by a random state $\boldsymbol{\rho}'$ whose expected Bloch vector satisfies $\E[\vec{\boldsymbol{\rho}'}] = \rmeas \cdot \vec{\rho}$ has the same outcome distribution as the depolarizing channel $\rho \mapsto \rmeas \cdot \rho + (1 - \rmeas) \cdot \frac{1}{2}I$. This is because measurement outcome probabilities depend linearly on the Bloch vector: for a measurement in basis $\{\ket{a}, \ket{a^\bot}\}$, the probability of outcome $\ket{a}$ is $\frac{1}{2}(1 + \vec{a} \cdot \vec{\rho'})$, and taking expectations gives $\frac{1}{2}(1 + \rmeas \cdot \vec{a} \cdot \vec{\rho})$, which matches the depolarized state.}

We therefore define:
\begin{definition} \label{def:am}
    We model single-qubit measurement error, with ``retention parameter'' $\rmeas \in [0,1]$, by assuming that when $\rho$ is measured in an intended orthonormal basis $\{\ket{a}, \ket{a^\bot}\}$, the outcome is as if the state $\rmeas \cdot \rho + (1 - \rmeas) \cdot \frac12 I$ were measured instead.
\end{definition}

\paragraph{The algorithm under SPAM.} We may now consider how the combined SPAM error affects our algorithm.
As described in \Cref{sec:recap}, when learner Alice makes random nontrivial probes to the Pauli channel defined by error rates~$\pi$, we may imagine that Charlie first draws $\bC \sim \pi$, obtaining some outcome $C \in \{0, 1, 2, 3\}^n$.  Given~$C$, everything else occurs in an unentangled fashion across qubits: For each $j \in [n]$, Alice prepares a random $\ket{\chi_+^{\bA_j}}$, state preparation error replaces this by $\rprep \cdot \ket{\chi_+^{\bA_j}} \hspace{-.3em}\bra{\chi_+^{\bA_j}} + (1 - \rprep)\cdot\frac{1}{2}I$, the channel changes this to $\rprep \cdot \sigma_{C_j} \ket{\chi_+^{\bA_j}}\hspace{-.3em}\bra{\chi_+^{\bA_j}} \sigma_{C_j}^\dagger + (1 - \rprep)\cdot\frac{1}{2}I$, and just before Alice's measurement in the $\ket{\chi_{\pm}^{\bA_j}}$ basis, measurement error changes the state to $\rprep \rmeas \cdot \sigma_{C_j} \ket{\chi_+^{\bA_j}}\hspace{-.3em}\bra{\chi_+^{\bA_j}} \sigma_{C_j}^\dagger + (1 - \rprep\rmeas) \cdot \frac12 I$.
This can be viewed as the SPAM-free result --- namely, $\sigma_{C_j}$ applied to $\ket{\chi_+^{\bA_j}}$ --- passed through the depolarizing channel with retention parameter $r = \rprep \cdot  \rmeas$.  In other words:
\begin{quotation}
    Given $C$, for each $j \in [n]$ independently, Alice gets the random readout she normally would (namely $\Xi(C)$) with probability~$r = \rprep \cdot \rmeas$, and a uniformly random bit from $\{\textsf{0}, \textsf{1}\}$ with probability $\delta \coloneqq 1-r$.
    In other words, she receives $\textrm{BSC}_{\frac{\delta}{2}} \circ \Xi(C)$.
\end{quotation}
It is easy to show that the reductions described after \Cref{def:xi} in \Cref{sec:recap} continue to hold.  Thus we finally obtain:
\begin{theorem} \label{thm:combo}
    The task of estimating the  error rates~$\pi$ of an $n$-qubit Pauli channel with SPAM governed by retention rates $\rprep, \rmeas$ reduces (with a factor $O(n \log (n/\eps))$ loss of efficiency) to the task of Individual Recovery of $\pi(\textsf{\emph{0}}^n)$ under noise channel $\textrm{\emph{BSC}}_{\frac{\delta}{2}} \circ \textrm{\emph{Z}}_{\frac13}$, where $\delta = 1 - \rprep \cdot \rmeas$.
\end{theorem}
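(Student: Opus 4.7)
The proof essentially assembles the ingredients already laid out in the excerpt, so my plan is to verify two small computations and then chain together the reductions from \Cref{sec:recap} unchanged.

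\textbf{Step 1: single-qubit SPAM analysis.} I would fix an outcome $C \in \{0,1,2,3\}^n$ for Charlie's draw and analyze a single qubit $j$. The sequence (prepare $\ket{\chi_+^{\bA_j}}$, depolarize with retention $\rprep$, apply $\sigma_{C_j}$, depolarize with retention $\rmeas$, measure in $\{\ket{\chi_\pm^{\bA_j}}\}$) simplifies because the depolarizing channel commutes with unitaries and composes multiplicatively in its retention parameter. So the state just before measurement is $r \cdot \sigma_{C_j} \ket{\chi_+^{\bA_j}}\hspace{-.3em}\bra{\chi_+^{\bA_j}} \sigma_{C_j}^\dagger + (1-r)\cdot \tfrac12 I$ with $r = \rprep \rmeas$. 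Since $\tfrac12 I$ produces a uniformly random bit in any measurement basis, the readout $\bR_j$ equals the SPAM-free readout $\Xi(C_j)$ with probability~$r$ and a uniform bit with probability $\delta = 1-r$. A one-line calculation shows this mixture is exactly $\mathrm{BSC}_{\delta/2}(\Xi(C_j))$: the output equals the SPAM-free value with probability $r + \delta/2 = 1 - \delta/2$ and is flipped with probability $\delta/2$. Independence across qubits is inherited from independence of the preparations, the (tensor-product) channel action, and the measurements.

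\textbf{Step 2: re-apply the \cite{FO21paulierror} reductions.} Conditioned on Charlie's draw $\bC \sim \pi$, Alice's readout is now distributed as $(\mathrm{BSC}_{\delta/2} \circ \Xi)(\bC)$, so her random nontrivial probes simulate Population Recovery for $\pi$ under the channel $\mathrm{BSC}_{\delta/2}\circ\Xi$. The $B$-altered distribution trick from \Cref{sec:recap} applies verbatim: flipping the readout bits on coordinates where $\sigma_{\bA_j}$ and $\sigma_{B_j}$ anticommute produces a sample distributed as $\Xi(B\oplus\bC)$ in the SPAM-free analysis, and appending Step~1 on top --- since $\mathrm{BSC}_{\delta/2}$ is symmetric and therefore commutes with deterministic bit flips --- yields a sample from $(\mathrm{BSC}_{\delta/2}\circ\Xi)(B\oplus\bC)$, i.e.\ from the $B$-altered Population Recovery problem under our new channel. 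Thus if Alice can estimate the $0^n$-mass under $\mathrm{BSC}_{\delta/2}\circ\Xi$, she can solve Individual Recovery for $\pi$ at any string $B$ of her choice, and by \Cref{prop:individ} the full Population Recovery task (equivalently, Pauli error estimation) follows with the claimed $O(n\log(n/\epsilon))$ overhead.

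\textbf{Step 3: collapse $\Xi$ to $\mathrm{Z}_{1/3}$.} The final step is to observe that $\Xi$ factors as $\Xi = \mathrm{Z}_{1/3} \circ \mu$, where $\mu : \{0,1,2,3\}\to\{\textsf{0},\textsf{1}\}$ sends $0\mapsto\textsf{0}$ and $1,2,3\mapsto\textsf{1}$; this is immediate from comparing the two channels' transition probabilities. Pushing $\pi$ forward through $\mu$ coordinatewise gives a distribution $\pi'$ on $\{\textsf{0},\textsf{1}\}^n$ with $\pi'(\textsf{0}^n) = \pi(0^n)$, and samples from $(\mathrm{BSC}_{\delta/2}\circ\Xi)(\bC)$ become samples from $(\mathrm{BSC}_{\delta/2}\circ\mathrm{Z}_{1/3})(\bx)$ for $\bx\sim\pi'$. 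Hence estimating $\pi(0^n)$ reduces to Individual Recovery of $\pi'(\textsf{0}^n)$ under the stated composite channel, completing the reduction.

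\textbf{Anticipated obstacles.} There is essentially no hard step: the proof is a straightforward bookkeeping exercise combining the depolarizing reformulation of SPAM (\Cref{def:sp,def:am}) with the reductions already recapped. The only point that warrants care is the commutation of $\mathrm{BSC}_{\delta/2}$ with the coordinatewise bit-flip correction used in the $B$-altered trick; this is immediate from the symmetry of $\mathrm{BSC}_{\delta/2}$, but must be mentioned explicitly since the correction is applied \emph{after} the channel has acted, while in Step~1 the BSC noise is derived before any correction.
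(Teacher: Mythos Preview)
Your proposal is correct and follows essentially the same approach as the paper: the paper carries out exactly your Step~1 in the paragraph preceding the theorem, then simply asserts that ``the reductions described after \Cref{def:xi} in \Cref{sec:recap} continue to hold,'' which is your Steps~2--3. If anything, you are slightly more careful than the paper in making explicit why the $B$-altered bit-flip correction still works in the presence of the extra $\mathrm{BSC}_{\delta/2}$ layer (via the symmetry of BSC).
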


\begin{remark} \label{rem:rc}
    As in \cite{FO21paulierror}, the above Theorem and our main \Cref{thm:mainresult} extend to the case of learning the Pauli error rates of a \emph{general} $n$-qubit channel~$\calE$. These are defined as the error rates of the Pauli channel formed from $\calE$ by Pauli twirling, namely $\rho \mapsto \E_{\bT \sim \{0,1,2,3\}^n} \sigma_{\bT}^\dagger \calE(\sigma_{\bT} \rho \sigma_{\bT}^\dagger) \sigma_{\bT}$. The details are exactly as in \cite[Section~6.1]{FO21paulierror}, with the only change to the statement of \Cref{thm:mainresult} being that Alice now randomly prepares qubits in one of the six pure states $\ket{\chi^j_{\pm}}$, rather than just the three $\ket{\chi^j_{+}}$.
\end{remark}

\section{Individual recovery for combined noise channels}
In the preceding sections, we reduced the Pauli error estimation problem (with SPAM) to an Individual Recovery task under the combined noise channel $\textrm{BSC}_{\frac{\delta}{2}} \circ \textrm{Z}_{\frac13}$ (\Cref{thm:combo}).

In this section, we derive the generating functions for this combined channel and establish the key quantity $\eta(\eps)$ that controls the sample complexity of Individual Recovery. The main result of this section, \Cref{thm:eta_to_complex_circles}, expresses $\eta(\eps)$ as an optimization problem involving polynomials evaluated on a complex circle, which we will then lower-bound in \Cref{sec:lowerbounding}.

The Individual Recovery problem is well understood separately for BSC and BEC~\cite{polyanskiy2017sample,DOS17populationrecovery}.  As noted in \cite{FO21paulierror}, the Z-channel behaves similarly to BEC in the context of Population Recovery.  However, handling the combined channel $\textrm{BSC}_{\frac{\delta}{2}} \circ \textrm{Z}_{\frac13}$ from \Cref{thm:combo} will require more work.  We will follow the methodology from~\cite{DOS17populationrecovery}.

\subsection{Generating functions}
Recall our goal is Individual Recovery of $\calD(\textsf{0}^n)$, given samples from an unknown probability distribution $\calD$ on $\{\textsf{0}, \textsf{1}\}^n$, masked by a binary noise channel~$\Lambda$.
As described in \cite{DOS17populationrecovery}, it is easy to assume without loss of generality that $\calD$ is symmetric, meaning it gives
equal probability mass to all strings with the same Hamming weight, where Hamming weight is defined as the number of $\textsf{1}$s in the string. This is because, given an arbitrary distribution $\mathcal{D}$, we can randomly permute each sample's coordinates, which is equivalent to sampling a symmetric distribution $\mathcal{D}^{\text{sym}}$, where $\mathcal{D}^{\text{sym}}(\textsf{0}^n) = \mathcal{D}(\textsf{0}^n)$. Therefore, it suffices to be able to estimate in the symmetric setting.

%With this assumption, we can rewrite $\calD$ as a probability distribution $\calD^{\text{sym}}$ on $\{0, 1, \dots, n\}$, where 
%\[ \calD(x) = \calD^{\text{sym}}(\vert x\vert)/{n \choose \vert x\vert }.\] 
%We can also write $\calD^{\text{sym}}$ as a simple probability row vector 
Given this, let us introduce the row vector 
$p = [p_0 \,\, p_1 \,\, \cdots \,\, p_n]$, where $p_i$ denotes the total probability mass $\calD$ has on Hamming weight~$i$.
Thus the learner's goal is to estimate~$p_0$.
Note also that the learner only needs to observe the Hamming weights of the samples it obtains (this is true even when $\Lambda$ is the BEC). Therefore, we can think of the learner as obtaining samples from $\{0, 1, \dots, n\}$ distributed according to the  probability row vector $q = [q_0 \,\, q_1 \,\, \cdots \,\, q_n]$, where 
\begin{equation}    \label{eqn:A}
    q = pA^{(\Lambda)}, \quad A^{(\Lambda)}_{ij} = \Pr[\text{a weight $i$ string becomes a weight $j$ string under $\Lambda$ noise}].
\end{equation}
The rows of each matrix $A^{(\Lambda)}$ can be nicely expressed using generating functions.  
For example, \cite[Propositions~2.1, 2.2]{DOS17populationrecovery}  are the following:
\begin{proposition}
    For the $A$ matrices associated with the $\textnormal{BEC}_\lambda$ and $\textnormal{BSC}_b$ channels, and with $z$ an arbitrary complex number:
    \begin{align}
        %G_i^{(\textnormal{BEC}_\lambda)}(z) &\coloneqq 
        \sum_{j=0}^n    A^{(\textnormal{BEC}_\lambda)}_{ij} z^j &= (\lambda + (1-\lambda)z)^i \label{eqn:GBEC}\\
        %G_i^{(\textnormal{BSC}_b)}(z) &\coloneqq 
        \sum_{j=0}^n A^{(\textnormal{BSC}_b)}_{ij} z^j &= (b + (1-b)z)^i((1-b) + b z)^{n-i}  \label{eqn:GBSC}
    \end{align}
\end{proposition}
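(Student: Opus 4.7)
The two identities follow from a direct calculation of the conditional distribution of the output Hamming weight, followed by the binomial theorem. I would treat the two channels separately but in parallel.

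For $\textnormal{BEC}_\lambda$, fix an input of weight $i$. The channel acts independently on the $n$ bits, and a coordinate contributes to the output weight if and only if it was a $1$ in the input and was not erased (the $0$'s and the erased positions contribute nothing). Hence the output weight equals the number of non-erased $1$'s, which is distributed as $\mathrm{Binomial}(i, 1-\lambda)$. Therefore
\begin{equation*}
    A^{(\textnormal{BEC}_\lambda)}_{ij} = \binom{i}{j}(1-\lambda)^j \lambda^{i-j},
\end{equation*}
and the binomial theorem immediately gives
\begin{equation*}
    \sum_{j=0}^n A^{(\textnormal{BEC}_\lambda)}_{ij} z^j = \sum_{j=0}^i \binom{i}{j}(1-\lambda)^j \lambda^{i-j} z^j = \bigl(\lambda + (1-\lambda)z\bigr)^i.
\end{equation*}

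For $\textnormal{BSC}_b$, fix an input of weight $i$. Split the $n$ coordinates into the $i$ ``one'' positions and the $n-i$ ``zero'' positions. The output weight is the sum of two independent counts: $\mathbf{W}_1 = $ the number of the $i$ ones that are \emph{not} flipped (Binomial$(i, 1-b)$), plus $\mathbf{W}_0 = $ the number of the $n-i$ zeros that \emph{are} flipped (Binomial$(n-i, b)$). Since the probability generating function of a sum of independent integer-valued variables is the product of their PGFs, I would compute each factor by the binomial theorem:
\begin{equation*}
    \mathbb{E}[z^{\mathbf{W}_1}] = \bigl(b + (1-b)z\bigr)^i, \qquad \mathbb{E}[z^{\mathbf{W}_0}] = \bigl((1-b) + bz\bigr)^{n-i},
\end{equation*}
and multiplying yields the claimed identity \eqref{eqn:GBSC}.

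There is no real obstacle here: the proof is a routine probability-generating-function calculation, and the only substantive observation is the decoupling of the ``ones'' and ``zeros'' halves in the BSC case. The reason this lemma is worth isolating, I expect, is that these generating-function forms are the input to the complex-analytic machinery used in the sequel (after \cite{DOS17populationrecovery}); the real work will come when an analogous generating function is derived and analyzed for the combined channel $\textrm{BSC}_{\delta/2} \circ \textrm{Z}_{1/3}$ of \Cref{thm:combo}.
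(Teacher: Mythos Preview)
Your argument is correct and is exactly the standard derivation. Note, however, that the paper does not actually prove this proposition: it simply quotes it from \cite[Propositions~2.1, 2.2]{DOS17populationrecovery}. Your proof is the same as the one given there, namely identifying the output weight as a binomial (for BEC) or a sum of two independent binomials (for BSC) and reading off the probability generating function.
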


Also, because both the BEC${}_\lambda$ and the Z${}_\lambda$ channels convert $\textsf{1}$'s to non-$\textsf{1}$'s with the same probability,~$\lambda$, it is easy to see:
\begin{fact} \label{fact:Z}
    The Z-channel and BEC have the same $A$-matrices, $A^{(\textnormal{Z}_\lambda)} = A^{(\textnormal{BEC}_\lambda)}$.
    % \begin{equation}
    %     %G_i^{(\textnormal{Z}_\lambda)}(z) \coloneqq \sum_{j=0}^n 
    %     A^{(\textnormal{Z}_\lambda)}_{ij} z^j = 
    %     A^{(\textnormal{BEC}_\lambda)}_{ij} z^j.
    %     %G_i^{(\textnormal{BEC}_\lambda)}(z).
    % \end{equation}
\end{fact}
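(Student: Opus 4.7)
The plan is to read off the entries of both $A$-matrices directly and observe they coincide. Both channels act on each bit independently, so the natural approach is to track what happens to the $i$ ones and the $n-i$ zeros in a weight-$i$ input string separately.

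First I would note that in either channel, an input $\textsf{0}$ is \emph{never} mapped to an output $\textsf{1}$: in Z${}_\lambda$ by definition, and in BEC${}_\lambda$ because $\textsf{0}$ either remains $\textsf{0}$ or becomes $\textsf{?}$. Thus every output $\textsf{1}$ must originate from an input $\textsf{1}$. Second, in either channel each input $\textsf{1}$ independently becomes an output $\textsf{1}$ with probability exactly $1-\lambda$, and becomes some non-$\textsf{1}$ symbol (either $\textsf{0}$ for Z or $\textsf{?}$ for BEC) with probability $\lambda$. The two channels differ only in \emph{which} non-$\textsf{1}$ symbol is produced, but since the output Hamming weight counts only $\textsf{1}$'s, this difference is invisible to the $A$-matrix.

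Combining these two observations, for either channel the output weight of a weight-$i$ input is precisely the number of the $i$ original $\textsf{1}$'s that survive, which is distributed as $\mathrm{Binomial}(i, 1-\lambda)$. Hence
\begin{equation}
A^{(\textnormal{Z}_\lambda)}_{ij} \;=\; \binom{i}{j}(1-\lambda)^j \lambda^{i-j} \;=\; A^{(\textnormal{BEC}_\lambda)}_{ij},
\end{equation}
where the second equality is obtained by reading off the coefficient of $z^j$ in the generating function \eqref{eqn:GBEC}. This gives the claimed equality of the two matrices.

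There is no real obstacle here; the only subtlety worth flagging is confirming that the notion of ``output Hamming weight'' used in setting up \eqref{eqn:A} is the count of output $\textsf{1}$'s (ignoring $\textsf{?}$'s in the BEC case), which is exactly what makes the erasure symbol and the flipped-to-$\textsf{0}$ symbol interchangeable from the learner's perspective. Once that is pinned down, the proof is essentially a one-line observation.
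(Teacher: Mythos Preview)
Your proof is correct and takes essentially the same approach as the paper, which simply states the fact with the one-line justification that ``both the BEC${}_\lambda$ and the Z${}_\lambda$ channels convert $\textsf{1}$'s to non-$\textsf{1}$'s with the same probability,~$\lambda$.'' You have just spelled out that observation in full, including the explicit binomial formula and the clarification about how output weight is counted for the BEC.
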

%\noindent This is because both the BEC${}_\lambda$ and the Z${}_\lambda$ channels convert $\textsf{1}$'s to non-$\textsf{1}$'s with the same probability,~$\lambda$.
% Proposition 2.1 in \cite{DOS17populationrecovery} states that with the matrix $A$ associated with the $\text{Erasure}_{1 - \nu}$ model, the generating function is
% \begin{equation}
%     G(A_i, z) = \sum_{j = 0}^n A_{ij}z^j = ((1 - \nu) + \nu z))^i.
% \end{equation}

% Let $b = \frac{1 - \nu}{2}$. By Proposition 2.2 in \cite{DOS17populationrecovery}, with the matrix $A$ associated with the $\text{Flip}_{b}$ model, the generating function is
% \begin{equation}
%     G(A_i, z) = \sum_{j = 0}^n A_{ij}z^j = \left(b + (1-b)z \right)^i \left((1 - b) + bz\right)^{n - i}.
% \end{equation}
We would now like to derive the generating function for our combined channel,
$\textnormal{BSC}_{\frac{\delta}{2}} \circ \textnormal{Z}_{\frac13}$. Let us make a general definition:
\begin{definition} \label{def:zflip}
    For parameters $0 \leq \nu_1, \nu_2 < 1$, we define the \emph{ZFlip} channel $\textnormal{ZFlip}_{\nu_1, \nu_2} : \{\textsf{0},\textsf{1}\} \to  \{\textsf{0},\textsf{1}\}$ to be the concatenated channel $\textnormal{BSC}_{\frac{1-\nu_2}{2}} \circ \textnormal{Z}_{1-\nu_1}$. 
\end{definition}
The slightly strange parameterization makes formulas simpler, as $\nu_1, \nu_2$ may be thought of as ``retention rates''. For the Pauli error estimation problem with overall SPAM retention parameter $r = \rprep \cdot \rmeas$, we care about
\begin{equation}
    \nu_1 = \frac23, \qquad \nu_2 = r. 
\end{equation}

%Note that the matrix $A$ for the $Z$-channel with retention probability $\nu$ is the same as the matrix $A$ for the $\text{Erasure}_{1 - \nu}$ model. This is because the probability of a weight $i$ string becoming a weight $j$ string under the Z-channel equals the probability of exactly $i - j$ $1$ coordinates being flipped to $\textsf{0}$, since no $\textsf{0}$ coordinates can be flipped to $\textsf{1}$ in the $Z$-channel. For the $\text{Erasure}_{1 - \nu}$ model, this probability equals the probability of exactly $i - j$ $\textsf{1}$ coordinates being replaced with a ``?". Therefore, since both channels have the same $A$ matrix, they also have the same generating function. So, we define our combined channel as follows:  
% \begin{definition}
% Let $0 \leq \nu_1, \nu_2 < 1$. $\text{ZFlip}_{\nu_1, \nu_2}$ represents an $\text{Erasure}_{1 - \nu_1}$ channel followed by a $\text{Flip}_{\frac{1 - \nu_2}{2}}$ channel. Note that $\nu_1$ represents the retention probabilty for the erasure channel, and $\nu_2$ represents the retention probability for the bit-flip channel.
% \end{definition}

To compute the generating functions for the ZFlip channel, we start with the following simple fact:
\begin{proposition} \label{prop:concat}
    Given channels $\Lambda_1, \Lambda_2$, the ``$A$-matrix'' (as in \Cref{eqn:A}) for the concatenated channel satisfies $A^{(\Lambda_2 \circ \Lambda_1)} = A^{(\Lambda_1)} \cdot A^{(\Lambda_2)}$.
    %Let $A$ be the probability matrix for the $\text{Erasure}_{1 - \nu_1}$ channel. Let $B$ be the probability matrix for the $\text{Flip}_{\frac{1 - \nu_2}{2}}$ model. Let $C$ be the probability matrix for the $\text{ZFlip}_{\nu_1, \nu_2}$ model. Then $C = AB$. 
\end{proposition}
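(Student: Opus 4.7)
The plan is to expand both sides directly from the definition, using the law of total probability conditioned on the Hamming weight of the intermediate string produced by $\Lambda_1$. Fix any weight-$i$ string $x$, and let $\mathbf{y} = \Lambda_1(x)$ and $\mathbf{z} = \Lambda_2(\mathbf{y})$. By definition,
\begin{equation}
    A^{(\Lambda_2 \circ \Lambda_1)}_{ij} = \Pr[\mathrm{wt}(\mathbf{z}) = j] = \sum_{k=0}^n \Pr[\mathrm{wt}(\mathbf{y}) = k] \cdot \Pr[\mathrm{wt}(\mathbf{z}) = j \mid \mathbf{y}].
\end{equation}
The first factor equals $A^{(\Lambda_1)}_{ik}$ by definition of the $A$-matrix. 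The goal is to identify the second factor with $A^{(\Lambda_2)}_{kj}$, after which the sum is precisely the $(i,j)$ entry of $A^{(\Lambda_1)} \cdot A^{(\Lambda_2)}$.

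The only substantive point is to verify that $\Pr[\mathrm{wt}(\Lambda_2(\mathbf{y})) = j \mid \mathbf{y} = y]$ depends on $y$ only through $\mathrm{wt}(y)$. This is the standard fact that, for a coordinatewise-independent binary channel, Hamming weight is a sufficient statistic for output weight. Concretely, since $\Lambda_2$ acts independently and identically on each coordinate, $\mathrm{wt}(\Lambda_2(y))$ decomposes as the sum of two independent quantities: the number of input $\mathsf{1}$'s (there are $k$ of them) that remain $\mathsf{1}$, plus the number of input $\mathsf{0}$'s (there are $n-k$) that flip to $\mathsf{1}$. Both summands are binomial-type random variables whose distributions depend only on the counts $k$ and $n-k$, not on the location pattern of $y$. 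Hence the conditional probability indeed equals $A^{(\Lambda_2)}_{kj}$.

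Combining these observations gives $A^{(\Lambda_2 \circ \Lambda_1)}_{ij} = \sum_k A^{(\Lambda_1)}_{ik} A^{(\Lambda_2)}_{kj}$ for every $(i,j)$, which is the desired matrix identity. There is no real obstacle here: the content of the proposition is exactly that ``weight'' provides a Markovian coarse-graining of the channel, under which composition of channels becomes composition of stochastic matrices on the weight variable. In the application of interest ($\Lambda_1 = \mathrm{Z}_{1-\nu_1}$, $\Lambda_2 = \mathrm{BSC}_{(1-\nu_2)/2}$), the intermediate alphabet is $\{\mathsf{0},\mathsf{1}\}$, so the sufficient-statistic property is immediate and the proposition applies cleanly.
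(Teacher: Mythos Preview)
Your proof is correct and follows essentially the same route as the paper's: condition on the intermediate Hamming weight and recognize the resulting sum as a matrix product. You add the explicit verification that output weight under a coordinatewise-i.i.d.\ channel depends only on the input weight, which the paper leaves implicit; the only cosmetic slip is that in your displayed sum the conditioning should read $\Pr[\mathrm{wt}(\mathbf{z})=j \mid \mathrm{wt}(\mathbf{y})=k]$ rather than $\Pr[\mathrm{wt}(\mathbf{z})=j \mid \mathbf{y}]$.
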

\begin{proof}
We have
\begin{align*}
    A^{(\Lambda_2 \circ \Lambda_1)}_{ij} &= \Pr[\text{a weight $i$ string becomes a weight $j$ string under $\Lambda_2 \circ \Lambda_1$ noise}] & \\
    &= \sum_{k = 0}^n \Pr[\text{weight $i$ $\to$ weight $k$ under $\Lambda_1$ noise}] \cdot \Pr[\text{weight $k$ $\to$ weight $j$ under $\Lambda_2$}] \\
    &= \sum_{k = 0}^n A^{(\Lambda_1)}_{ik} A^{(\Lambda_2)}_{kj} = (A^{(\Lambda_1)}  \cdot A^{(\Lambda_2)})_{ij}. \qedhere
\end{align*}
\end{proof}

Now we can derive the generating function for the $\text{ZFlip}_{\nu_1, \nu_2}$ channel:
\begin{proposition} \label{prop:ourG}
Given $0 \leq \nu_1, \nu_2 \leq 1$, write $b = \frac{1 - \nu_2}{2}$.  Then 
\begin{equation} \label{eqn:Gcombo}
    %G_i^{(\textnormal{ZFlip}_{\nu_1,\nu_2})}(z) = 
    G_i(z) \coloneqq \sum_{j = 0}^n A^{(\textnormal{ZFlip}_{\nu_1,\nu_2})}_{ij} z^j = ((1-b)+bz)^n \cdot ((1-\nu_1) + \nu_1 w)^i,
    %\left((1 - b') + b'z \right)^i \left((1 - b) + bz\right)^{n - i}.
    \qquad \text{where  $w \coloneqq \frac{b+(1-b)z}{(1-b)+bz}$.}
\end{equation}    
\end{proposition}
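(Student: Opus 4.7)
The plan is a direct computation using the tools already assembled. First I would invoke \Cref{prop:concat} to write
\[
A^{(\textnormal{ZFlip}_{\nu_1,\nu_2})} \;=\; A^{(\textnormal{Z}_{1-\nu_1})} \cdot A^{(\textnormal{BSC}_{b})}, \qquad b = \tfrac{1-\nu_2}{2},
\]
and then use \Cref{fact:Z} to replace $A^{(\textnormal{Z}_{1-\nu_1})}$ with $A^{(\textnormal{BEC}_{1-\nu_1})}$, whose row generating function is given by \eqref{eqn:GBEC} with erasure probability $\lambda = 1-\nu_1$.

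Next, I would expand $G_i(z)$ by swapping the order of summation:
\[
G_i(z) \;=\; \sum_{j=0}^n \sum_{k=0}^n A^{(\textnormal{BEC}_{1-\nu_1})}_{ik} A^{(\textnormal{BSC}_b)}_{kj} z^j
\;=\; \sum_{k=0}^n A^{(\textnormal{BEC}_{1-\nu_1})}_{ik} \biggl(\sum_{j=0}^n A^{(\textnormal{BSC}_b)}_{kj} z^j\biggr).
\]
The inner sum is precisely the row generating function \eqref{eqn:GBSC} for $\textnormal{BSC}_b$ applied to a weight-$k$ string, which equals $(b + (1-b)z)^k ((1-b) + bz)^{n-k}$.

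The key algebraic move is then to pull out the factor $((1-b)+bz)^n$, leaving behind the ratio $w = \tfrac{b+(1-b)z}{(1-b)+bz}$ raised to the $k$th power. This gives
\[
G_i(z) \;=\; ((1-b)+bz)^n \sum_{k=0}^n A^{(\textnormal{BEC}_{1-\nu_1})}_{ik} w^k.
\]
Finally, I would recognize the remaining sum as the BEC generating function from \eqref{eqn:GBEC} evaluated at the point $z=w$, which equals $((1-\nu_1) + \nu_1 w)^i$. Substituting this back yields exactly the claimed formula \eqref{eqn:Gcombo}.

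There isn't really a conceptual obstacle here; the whole argument is an exercise in generating function manipulation once \Cref{prop:concat}, \Cref{fact:Z}, and the two expressions in \eqref{eqn:GBEC}, \eqref{eqn:GBSC} are in hand. The only ``clever'' step is spotting that factoring out $((1-b)+bz)^n$ from the BSC row generating function converts the $k$-dependent part into a clean $w^k$, which is precisely what allows the BEC row generating function to be reapplied at $w$. Everything else is bookkeeping.
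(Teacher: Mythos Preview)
Your proposal is correct and follows essentially the same approach as the paper's own proof: invoke \Cref{prop:concat}, swap the order of summation, use \eqref{eqn:GBSC} for the inner sum and factor out $((1-b)+bz)^n$ to expose $w^k$, and then apply \eqref{eqn:GBEC} (via \Cref{fact:Z}) at $w$. The paper's argument is identical in structure and in the key algebraic step.
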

\begin{proof}
    For brevity, write $A^{(1)} = A^{(\textrm{Z}_{1-\nu_1})}$, which from \Cref{fact:Z} and \Cref{eqn:GBEC} we know has
    \begin{equation}
        \sum_{k=0}^n A^{(1)}_{ik}z^k = ((1-\nu_1) + \nu_1 z)^i.% = \sum_{\ell=0}^i \binom{i}{\ell} (1-\nu_1)^{i-\ell} (\nu_1 z)^\ell.
    \end{equation}
    And for brevity, write $A^{(2)} = A^{(\textrm{BSC}_{b})}$, which from \Cref{eqn:GBSC} we know has
    \begin{align}
        \sum_{j=0}^n A^{(2)}_{kj} z^j = (b + (1-b)z)^k((1-b) + b z)^{n-k} 
        =((1-b) + b z)^{n} \cdot w^k.
    \end{align}
    Now \Cref{prop:concat} tells us that $\sum_{j = 0}^n A^{(\textnormal{ZFlip}_{\nu_1,\nu_2})}_{ij} z^j$ equals
\begin{align}
    %\sum_{j = 0}^n C_{ij}z^j &= 
    \sum_{j = 0}^n \sum_{k = 0}^n A^{(1)}_{ik}A^{(2)}_{kj}z^j 
    &= \sum_{k = 0}^n A^{(1)}_{ik} \left( \sum_{j = 0}^n A^{(2)}_{kj}z^j \right)
    = ((1-b) + b z)^{n} \cdot \sum_{k=0}^n A^{(1)}_{ik} w^k \\
    &= ((1-b) + b z)^{n} \cdot ((1-\nu_1) + \nu_1 w)^i,% \\
    %&= ((1-b) + bz)^{n-i} \cdot ((1-\nu_1)((1-b)+bz) + \nu_1(b + (1-b)z))^i.    
    % &= \sum_{k = 0}^i {i \choose k}(1 - \nu_1)^{i-k}(\nu_1)^k \left( \sum_{k = 0}^n A^{(2)}_{kj}z^j \right) & \\
    % &= \sum_{k = 0}^i {i \choose k}(1 - \nu_1)^{i-k}(\nu_1)^k \left((b + (1-b)z)^k ((1-b) + bz)^{n - k} \right) \\
    % &= ((1-b) + bz)^{n - i} \sum_{k = 0}^i {i \choose k}\left(\nu_1 \left(b + (1-b)z \right)\right)^k \left((1 - \nu_1) \left((1-b) + bz\right)\right)^{i - k} \\
    % &= ((1-b) + bz)^{n - i} \left(\nu_1b + \nu_1(1-b)z + (1 - \nu_1)(1-b) + (1 - \nu_1)bz \right)^i \\
    % &=  \left(1 - (b + \nu_1\nu_2) + (b + \nu_1\nu_2)z \right)^i \left((1 - b) + bz\right)^{n - i}. \qedhere
\end{align}
as claimed.
\end{proof}

Recall now that our task is to (with high probability) estimate $p_0$ to additive precision~$\eps$, given samples drawn from distribution~$q = p A^{(\Lambda)}$, where $\Lambda$ is the ZFlip channel of interest from \Cref{def:zflip}.  The work \cite{DOS17populationrecovery} precisely lower-bounds the sample complexity of this task, and gives an algorithmic upper bound matching to within polynomial factors:
\begin{theorem} \label{thm:dos}
    (Combination of \cite[Props.~3.1, 3.2, and subsequent, slight notation change]{DOS17populationrecovery}.)  For binary noise channel~$\Lambda$, define
    % \ryan{Alert to Samvitti: I changed the meaning of $\epsilon$ by a factor of 2}
    \begin{equation}
        \eta_\Lambda(\epsilon) = \displaystyle\min_{\substack{\textnormal{probability vectors }p, p' \\ \vert p_0 - p'_0\vert > \epsilon}} \ \lVert pA^{(\Lambda)} - p'A^{(\Lambda)} \rVert_1.
    \end{equation}
    Then the sample complexity of the Individual Recovery task is $\Omega(\frac{1}{\eta_\Lambda(2\epsilon)})/\sqrt{n}$.  Moreover, the task can be solved with $\poly(n, \frac{1}{\eta_\Lambda(\epsilon)})$ samples and running time.  
    
    Finally, if the generating function $\sum_{j=0}^n A^{(\Lambda)}_{ij} z^j$ is $G_i(z)$, then
    \begin{equation}
         \eta_\Lambda(\epsilon) \geq 
         %\min \left\{ \max_{\substack{z \in \C \\ \vert z \vert = 1}} \left\vert Q(z)\right\vert \ : \  Q(z) = \sum_{j=0}^n c_j G_j(z), \ c_0 > \eps, \ \sum_{j=0}^n c_j = 0, \ \sum_{j=0}^n |c_j| \leq 2\right\}.
         \min_{\substack{c \in \Delta \\ c_0 > \epsilon}} \max_{\substack{z \in \C \\ \vert z \vert = 1}} \ \left\vert \sum_{j=0}^n c_j G_j(z)\right\vert,
    \end{equation}
    where $\Delta \coloneqq \{c = (c_0, c_1, \dots, c_n) : \sum_j c_j = 0, \ \sum_j |c_j| \leq 2\}$.
\end{theorem}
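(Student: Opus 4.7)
The plan has three pieces, all following \cite{DOS17populationrecovery}. The uniform starting point is to rewrite $\eta_\Lambda$ in terms of the signed difference $c \coloneqq p - p'$: any such $c$ satisfies $\sum_j c_j = 0$ and $\|c\|_1 \leq \|p\|_1 + \|p'\|_1 = 2$, and WLOG $c_0 > \epsilon$; conversely, any $c \in \Delta$ with $c_0 > \epsilon$ arises as $p - p'$ for valid probability vectors $p, p'$ (absorb any slack into a shared additive term). Hence $\eta_\Lambda(\epsilon) = \min_{c \in \Delta,\ c_0 > \epsilon} \|c A^{(\Lambda)}\|_1$, which is the form I would use throughout.

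For the sample complexity \emph{lower bound}, I would run a two-point hypothesis test on the $p, p'$ attaining $\eta_\Lambda(2\epsilon)$: any algorithm returning an $\epsilon$-accurate estimate of $p_0$ must distinguish them, but its only data are i.i.d.\ samples from $q = p A^{(\Lambda)}$ or $q' = p' A^{(\Lambda)}$, which are $\tfrac{1}{2}\eta_\Lambda(2\epsilon)$-close in total variation. The standard Le~Cam inequality $d_{\mathrm{TV}}(q^{\otimes m}, (q')^{\otimes m}) \leq m \cdot d_{\mathrm{TV}}(q, q')$ forces $m = \Omega(1/\eta_\Lambda(2\epsilon))$; the precise $1/\sqrt{n}$ factor in the theorem statement is tracked through the reduction to symmetrized weight-valued samples exactly as in \cite{DOS17populationrecovery}.

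For the algorithmic \emph{upper bound}, the plan is ``estimate $q$ empirically, then invert''. Draw $m = \poly(n, 1/\eta_\Lambda(\epsilon))$ samples to form the empirical distribution $\hat q$, and solve a linear program for any probability vector $\tilde p$ minimizing $\|\tilde p A^{(\Lambda)} - \hat q\|_1$. Standard concentration for empirical distributions on alphabet size $n+1$ yields $\|\hat q - q\|_1 < \tfrac{1}{2}\eta_\Lambda(\epsilon)$ with high probability, so---since the true $p$ is feasible---$\|\tilde p A^{(\Lambda)} - p A^{(\Lambda)}\|_1 < \eta_\Lambda(\epsilon)$. The contrapositive of the definition of $\eta_\Lambda$ then forces $|\tilde p_0 - p_0| \leq \epsilon$.

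Finally, the complex-analytic lower bound on $\eta_\Lambda(\epsilon)$ reduces to one observation: for $|z| = 1$ and any vector $v$, $\bigl|\sum_j v_j z^j\bigr| \leq \sum_j |v_j|\,|z|^j = \|v\|_1$. Applying this to $v = c A^{(\Lambda)}$ and swapping summation order,
\begin{equation}
    \bigl\|c A^{(\Lambda)}\bigr\|_1 \ \geq\ \left|\sum_{j=0}^n \bigl(c A^{(\Lambda)}\bigr)_j z^j\right| \ =\ \left|\sum_{i=0}^n c_i \sum_{j=0}^n A^{(\Lambda)}_{ij} z^j\right| \ =\ \left|\sum_{i=0}^n c_i G_i(z)\right|.
\end{equation}
Maximizing over $|z| = 1$ and minimizing over $c \in \Delta$ with $c_0 > \epsilon$ yields the claimed bound. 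The real obstacle---and, I expect, the main labor in the rest of the paper---is not this essentially trivial reformulation, but rather establishing a useful concrete \emph{lower bound} on $\min_c \max_{|z|=1} |\sum_j c_j G_j(z)|$ for the specific ZFlip generating functions of \Cref{prop:ourG}, which is where the extension of the complex analysis of \cite{BE97} will come in.
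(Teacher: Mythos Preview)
Your sketch is correct and faithfully recapitulates the arguments from \cite{DOS17populationrecovery}; the paper itself does not supply a proof of this theorem but simply cites it as a combination of results from that reference. In particular, your three pieces---the two-point Le~Cam lower bound, the ``estimate $\hat q$ then LP-invert'' upper bound, and the one-line $\|cA\|_1 \geq |\sum_j (cA)_j z^j|$ generating-function inequality---are exactly the ingredients of \cite[Propositions~3.1, 3.2, and the discussion thereafter]{DOS17populationrecovery}.
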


\begin{remark}\label{rem:reconcile}
We briefly explain how \Cref{thm:dos} yields the explicit sample complexity in \Cref{thm:mainresult}.
\begin{enumerate}
    \item \Cref{thm:dos} states that Individual Recovery can be solved with $\poly(n, 1/\eta_\Lambda(\eps))$ samples.
    \item \Cref{thm:etafinalbound} lower-bounds $\eta(\eps)$. In the high-SPAM regime, this gives \\ $1/\eta(\eps) \leq \exp(O((\delta n)^{1/3} \ln^{2/3}(1/\eps)))$. In the low-SPAM regime, it gives $1/\eta(\eps) \leq \eps^{-O(1)}$.
    \item Combining with the $O(n\log(n/\eps))$ overhead from the branch-and-prune reduction (\Cref{prop:individ} and \Cref{thm:combo}) gives the stated complexity in \Cref{thm:mainresult}. The polynomial factors in $n$ and $\eps$ are absorbed by the $\poly(\cdot)$ and $\exp(O(\cdot))$ notation.
\end{enumerate}
\end{remark}

In \Cref{thm:dos}, we need to understand $G_j(z)$'s values for $z$ on the complex unit circle.  Recall from \Cref{prop:ourG} that for our ZFlip channel of interest, with the M\"{o}bius transformation $w = \frac{b+(1-b)z}{(1-b)+bz}$, we have
\begin{equation}
    G_j(z) = ((1-b)+bz)^n \cdot ((1-\nu_1) + \nu_1 w)^j = \left(\frac{1-2b}{(1-b)-bw}\right)^n  \cdot ((1-\nu_1) + \nu_1 w)^j,
\end{equation}
where we solved for $z$ in terms of~$w$. It is easy to see that in the formula for~$w$, the numerator and denominator of $w$ have the same squared length when $|z| = 1$; i.e., we have $|w|^2 = 1$.  Since M\"{o}bius transformations map circles to circles, we conclude that the locus of $w$ for $|z| = 1$ is also the unit circle, $|w| = 1$.\footnote{Unless $b = \frac12$, in which case $w$ is constantly~$1$.  This corresponds to the trivial case of maximal SPAM, $\nu_2 = 0$, which we henceforth exclude.}
Now writing $w = e^{i\theta}$ for $\theta \in (-\pi, \pi]$, 
we have 
\begin{equation}
    G_j(\theta) =  \left(\frac{\nu_2}{(1-b)-be^{i\theta} }\right)^n  \cdot ((1-\nu_1) + \nu_1 e^{i\theta})^j,% \quad \text{hence} \quad Q(z) = \left(\frac{\nu_2}{(1-b)-be^{i\theta} }\right)^n \cdot \sum_{j=0}^n c_j ((1-\nu_1) + \nu_1 e^{i\theta})^j,
\end{equation}
where recall $b = \frac{1-\nu_2}{2}$. Using the notation $Q(v) = \sum_{j=0} c_j v^j$,  \Cref{thm:dos} tells us that 
for $\Lambda = \textnormal{ZFlip}_{\nu_1,\nu_2}$,
\begin{equation}
    \eta_{\Lambda}(\eps) \geq \min_{Q} \max_{-\pi < \theta \leq \pi} \ \left\vert\frac{\nu_2}{(1-b)-be^{i\theta} }\right\vert ^n \cdot \left|Q\left((1-\nu_1) + \nu_1 e^{i\theta}\right)\right|,
\end{equation}
where the minimum is over real-coefficient polynomials of degree at most~$n$ satisfying $Q(0) > \eps$, $Q(1) = 0$, and $L(Q) \coloneqq \sum_j |c_j| \leq 2$.  Finally, note that
\begin{multline}
    \abs{(1-b)-be^{i\theta}}^2 = (1-b)^2 + b^2 - 2b(1-b) \cos\theta \\
    = \frac{1+\nu_2^2}{2} - \frac{1-\nu_2^2}{2} \cos \theta 
    = \frac{1+\nu_2^2}{2} - \frac{1-\nu_2^2}{2} (1-2\sin^2(\theta/2)) = \nu_2^2 + (1-\nu_2^2)\sin^2(\theta/2).
\end{multline}
Thus
\begin{equation}
    \left|\frac{\nu_2}{(1-b)-be^{i\theta} }\right|^n = \left(\frac{\nu_2^2}{\nu_2^2 + (1-\nu_2^2)\sin^2(\theta/2)}\right)^{n/2} = \left(1 + \frac{1-\nu_2^2}{\nu_2^2}\sin^2(\theta/2)\right)^{-n/2}.
\end{equation}
Putting everything together, we conclude:

\begin{theorem}\label{thm:eta_to_complex_circles}
    Individual Recovery of $\pi(\textsf{\emph{0}}^n)$ under the noise channel $\textnormal{ZFlip}_{\nu_1, \nu_2}$ can be accomplished with $\poly(n, 1/\eta(\eps))$ samples and running time, where the function $\eta(\eps)$ satisfies
    \begin{equation}
        \eta(\epsilon)\geq \min_{Q} \max_{-\pi < \theta \leq \pi} \left(1 + \frac{1-\nu_2^2}{\nu_2^2}\sin^2(\theta/2)\right)^{-n/2} \cdot \left\vert Q((1 - \nu_1) + \nu_1 e^{i\theta})\right\vert
    \end{equation}
    and the minimum is over real-coefficient polynomials $Q$ of degree at most $n$ satisfying $Q(0) > \epsilon$, $Q(1) = 0$, and $L(Q) \leq 2$.
\end{theorem}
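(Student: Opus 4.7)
The plan is to assemble three ingredients developed just before the theorem statement: Theorem \ref{thm:dos} applied with $\Lambda = \textnormal{ZFlip}_{\nu_1, \nu_2}$, the explicit generating function from Proposition \ref{prop:ourG}, and the Möbius substitution $w = \frac{b + (1-b)z}{(1-b)+bz}$ with $b = (1-\nu_2)/2$. No new machinery is required; the task is to bookkeep the substitution carefully.

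First I would invoke Theorem \ref{thm:dos}, which immediately yields the $\poly(n, 1/\eta(\eps))$ algorithmic bound and reduces the lower-bound task to controlling $\min_{c \in \Delta,\, c_0 > \eps}\, \max_{|z|=1}\, \bigl|\sum_j c_j G_j(z)\bigr|$, with the $c_j$ real since they arise as differences of real probability vectors. Plugging in $G_j(z) = ((1-b)+bz)^n \cdot ((1-\nu_1) + \nu_1 w)^j$ from Proposition \ref{prop:ourG}, I would then change variables $z \mapsto w$. The crucial observation is that this Möbius map is a bijection of the unit circle with itself: when $|z|=1$, the numerator and denominator of $w$ satisfy $|b+(1-b)z|^2 = |(1-b)+bz|^2 = b^2+(1-b)^2+2b(1-b)\operatorname{Re}(z)$, hence $|w|=1$. (The degenerate case $b = 1/2$, i.e.\ $\nu_2 = 0$, corresponds to maximal SPAM and is excluded.) Inverting the Möbius equation gives $(1-b)+bz = \frac{1-2b}{(1-b)-bw} = \frac{\nu_2}{(1-b)-bw}$, so after writing $w = e^{i\theta}$ the factor $((1-b)+bz)^n$ becomes $\bigl(\frac{\nu_2}{(1-b) - be^{i\theta}}\bigr)^n$.

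Next I would simplify this modulus via a half-angle identity:
\begin{equation*}
|(1-b) - be^{i\theta}|^2 = (1-2b)^2 + 2b(1-b)(1-\cos\theta) = \nu_2^2 + (1-\nu_2^2)\sin^2(\theta/2),
\end{equation*}
using $2b(1-b) = \tfrac{1-\nu_2^2}{2}$. Rearranging yields $\bigl|\tfrac{\nu_2}{(1-b)-be^{i\theta}}\bigr|^n = \bigl(1 + \tfrac{1-\nu_2^2}{\nu_2^2}\sin^2(\theta/2)\bigr)^{-n/2}$, which is exactly the geometric damping factor in the theorem statement. Finally, defining $Q(v) = \sum_j c_j v^j$, the constraints $c_0 > \eps$, $\sum_j c_j = 0$, $\sum_j |c_j| \leq 2$ translate verbatim into $Q(0) > \eps$, $Q(1) = 0$, $L(Q) \leq 2$. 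Setting $v = (1-\nu_1) + \nu_1 e^{i\theta}$ in $Q$ completes the identification and delivers the stated lower bound on $\eta(\eps)$.

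There is no substantial obstacle here, since every step is either a direct appeal to an earlier result or routine algebra. The only point warranting a moment of care is the legitimacy of rewriting $\max_{|z|=1}$ as $\max_{|w|=1}$, which rests on the one-line modulus computation establishing that the Möbius map is a self-bijection of the unit circle whenever $\nu_2 > 0$.
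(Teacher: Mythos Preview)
Your proposal is correct and follows essentially the same route as the paper: invoke Theorem~\ref{thm:dos}, substitute the generating function from Proposition~\ref{prop:ourG}, use the M\"obius change of variables $z \mapsto w$ (checking it is a self-bijection of the unit circle when $\nu_2 > 0$), invert to express $(1-b)+bz$ in terms of $w$, and simplify the modulus via a half-angle identity. Your explicit note that the $c_j$ are real (as differences of probability vectors) is a small clarification the paper leaves implicit.
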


\section{Lower-bounding $\eta(\epsilon)$} \label{sec:lowerbounding}
We now turn to the main technical contribution of the paper: lower-bounding $\eta(\eps)$ via \Cref{thm:eta_to_complex_circles}. The quantity we must control is the modulus of a polynomial $Q$ at the points $(1-\nu_1) + \nu_1 e^{i\theta}$, which trace out the circle of radius $\nu_1$ centered at $1 - \nu_1$. This circle passes through $1$ on the real axis, where $Q(1) = 0$. The difficulty is that we need $|Q|$ to be large on an \emph{arc} near this root, even though $Q$ vanishes at the root itself and its coefficients are constrained. Classical results of \cite{BE97} give such lower bounds, but only on the unit circle; our circle has smaller radius and a shifted center. The bulk of this section is devoted to extending their complex analysis to this setting.
 
The argument has three ingredients, developed in \Cref{subsec:lemmas}. The first is a Hadamard three-line lemma (\Cref{lem:threeline}) that lets us trade off bounds on nested arcs. The second is an arc lower bound for functions in a suitable analytic class on circles of radius slightly smaller than~$1$ (\Cref{cor:BE4.1}). The third, \Cref{thm:BE3.1}, combines these two to give the arc lower bound we need on $\partial D_{\nu_1}(1 - \nu_1)$. We apply the third in \Cref{subsec:mainproof} to prove the bound on $\eta(\eps)$.

\rev{\subsection{Statement of the main bound}}
We will prove the following bound:
\begin{theorem}\label{thm:etafinalbound}
\rev{For $0 < \epsilon < 1$ and $n \in \N$, with the ZFlip parameters $\nu_1 = \frac{2}{3}$ and $\nu_2 = 1-\delta$ from \Cref{def:zflip}, \rev{$0 \leq \delta \leq 0.99$,} and $\delta_c := \dfrac{6}{\pi^3}\cdot\dfrac{\ln(1/\epsilon)}{n}$, we have}
\begin{equation}
    \rev{\eta(\epsilon) \geq
    \begin{cases}
        \exp\left(-O\left((\delta n)^{1/3} \ln^{2/3}(1/\epsilon)\right)\right), & \text{if } \delta \geq \delta_c, \\[2.5ex]
        \epsilon^{O(1)}, & \text{if } \delta < \delta_c.
    \end{cases}}
\end{equation}
    where $\eta(\epsilon)$ characterizes the sample complexity and runtime of solving Individual Recovery of $\pi(\textsf{\emph{0}}^n)$ under the noise channel $\textnormal{ZFlip}_{\nu_1, \nu_2}$.
\end{theorem}

\paragraph{\rev{Setup for the proof.}}

Note that for $-\pi < \theta < \pi$, $\theta^2/16 \leq \sin^2(\theta/2) \leq \theta^2/4$. Using the fact that $e^{-x} \leq (1 + x)^{-1}$ for all $x \geq 0$, we have that 
\begin{equation}
    \left(1 + \frac{1-\nu_2^2}{\nu_2^2}\sin^2(\theta/2)\right)^{-1} \geq \exp\left(-\frac{1-\nu_2^2}{4\nu_2^2} \theta^2 \right),
\end{equation}
and therefore,
\begin{equation}\label{eq:eta_gaussian_bound}
    \eta(\epsilon)\geq \min_{Q} \max_{-\pi < \theta \leq \pi} \exp\left(-\frac{1-\nu_2^2}{8\nu_2^2} \theta^2n \right) \cdot \left\vert Q((1 - \nu_1) + \nu_1 e^{i\theta})\right\vert.
\end{equation} 

Next, fix an arbitrary vector $c = [c_0 \, c_1 \cdots c_n] \in \Delta$ with $c_0 > \epsilon$. Let $Q_c$ denote the polynomial $Q$ with coefficients $c$. \rev{We now define
\begin{equation}
\label{eq:Fc_def}
    F_c = \max_{-\pi < \theta \leq \pi} \exp\left(-\frac{1-\nu_2^2}{8\nu_2^2} \theta^2n \right) \cdot \left\vert Q_c((1 - \nu_1) + \nu_1 e^{i\theta})\right\vert.
\end{equation}}

\rev{Since $\eta(\epsilon) \geq \min_c F_c$, it suffices to lower-bound $F_c$ for our arbitrary $c$.} $F_c$ is controlled by $\vert Q_c \vert$ on the circle $\partial D_{\nu_1}(1 - \nu_1)$ of radius $\nu_1$ centered at $1 - \nu_1$, so we now need to establish a lower bound for the modulus of $Q_c$ on this circle. Actually, we will prove a lower bound for a slightly modified polynomial 
\begin{equation}
    \widetilde{Q}_c(z) = \frac{1}{c_0}Q_c(z) = \displaystyle\sum_{i=0}^n \Tilde{c}_i z^i,
\end{equation}
where $\Tilde{c_0} = 1$, and $\vert \Tilde{c_i} \vert \leq \frac{1}{c_0}$ for all $i$. In our proof, we will only need $\widetilde{Q}_c$ to satisfy the following two properties:

For $M \geq 1$ and $0 < \gamma \leq 1$,
\begin{equation}\label{eq:Qtilde_properties}
    \left| \widetilde{Q}_c\left(\frac{1}{4M}\right)\right| \geq \frac{1}{2}, \quad\text{and}\quad
    |z| \leq 1-\gamma \ \Rightarrow\  |\widetilde{Q}_c(z)| \leq M/\gamma. 
\end{equation}
\rev{These two properties play complementary roles in establishing a lower bound on $\widetilde{Q}_c$ over an arc of the circle near the point $z = 1$. The first guarantees that the modulus is non-negligible at an interior point of the disk. The second keeps the modulus from being too large on the subset of the circle boundary bounded away from $z = 1$. This, together with the maximum modulus, forces $|\widetilde{Q}_c|$ to be large on the arc near $z = 1$, which is the bound we ultimately require.}

The first property holds, since, if we set $M$ to $\frac{1}{c_0}$, and let $z_0 = \frac{1}{4M}$, then
\begin{equation}
    \vert \widetilde{Q}_c(z_0) \vert \geq 1 - \sum_{i = 1}^n M(z_0)^i = 1 - \frac{Mz_0(1 - z_0^n)}{1 - z_0} \geq 1 - \frac{M z_0}{1 - z_0} \geq \frac{1}{2},
\end{equation}
as desired. The second property holds, since, for $0 < \gamma \leq 1$ and for $z \in \C$ such that $\vert z \vert \leq 1 - \gamma$, we have that
\begin{equation}
    \vert \widetilde{Q}_c(z) \vert \leq \sum_{i = 0}^n \vert \Tilde{c_i} \vert \vert z \vert^i \leq \sum_{i = 0}^n \frac{1}{c_0}\vert z \vert^i \leq \sum_{i = 0}^n M \vert z \vert^i= \frac{M(1 - \vert z \vert^{n+1})}{1 - \vert z \vert} \leq \frac{M}{\gamma}.
\end{equation}

Since our proof only requires these two properties, we will prove our lower bound for the set of all polynomials that satisfy them, which we define in the following section.

\rev{\subsection{Function class and key lemmas}\label{subsec:lemmas}}
\rev{We now introduce the function class and the key lemmas needed for the proof of \Cref{thm:etafinalbound}. These are stated and proved independently, and are then combined in \Cref{subsec:mainproof} to complete the argument.}

\begin{definition}\label{def:C_class}
    Let $0 < \kappa \leq 1$ and $\lambda \geq 1$. $\calC_{\kappa, \lambda}$ is the set of all analytic functions $f$ that are continuous on the closed unit disk \rev{for which there exists $M \geq 1$ such that}
    \begin{equation}
        \left\vert f\left( \frac{1}{4M} \right) \right\vert \geq \kappa, \quad\text{and}\quad \rev{\text{for all } z \in \C \text{ with }} \vert z \vert \leq 1 - \gamma \ \Rightarrow\ \vert f(z) \vert \leq \frac{\lambda M}{1 - \vert z \vert}\rev{,}
    \end{equation}
\rev{for all $0 < \gamma \leq 1$.}
\end{definition}

\rev{\noindent (Recall from \Cref{eq:Qtilde_properties} that $\widetilde{Q}_c \in \calC_{1/2,1}$ with $M = 1/c_0 \geq 1/\eps$.)}
We will prove our lower bound for all polynomials in the set $\calC_{\frac12, 1}$, on an arc of the circle $\partial D_{\nu_1}(1 - \nu_1)$ with central angle $\theta \in (0, \pi)$. This arc has endpoints $\alpha$ and $\beta$, where  
\begin{equation}\label{eq:alpha_and_beta}
   \alpha = \nu_1\left[\cos(\theta/2) + i \sin(\theta/2) \right] + 1 - \nu_1,  \text{ and } \beta = \nu_1\left[\cos(\theta/2) - i \sin(\theta/2) \right] + 1 - \nu_1. 
\end{equation} 
Before delving into the proof, we must establish several lemmas. The first is derived from the Hadamard three-line theorem. We define 
\begin{equation}
    I_t = \left\{ z \in \C \mid \arg\left(\frac{\alpha - z}{z - \beta} \right) = t \right\},
\end{equation}
following the notation used in \cite{BE97}. Note that 
\begin{equation}
    \arg\left(\frac{\alpha - z}{z - \beta} \right) = \arg(\alpha - z) - \arg(z - \beta).
\end{equation}
$\arg(\alpha - z)$ represents the argument of the vector from point $z$ to point $\alpha$, while $\arg(z - \beta)$ represents the argument of the vector from point $\beta$ to point $z$. The difference between these two arguments is 0 when $z$ lies on the chord between $\alpha$ and $\beta$. Therefore, $I_0$ corresponds to this chord. Furthermore, by the inscribed angle theorem, the difference between these two arguments is $\frac{\theta}{2}$ when $z$ lies on the smaller arc of $\partial D_{\nu_1}(1 - \nu_1)$ with endpoints $\alpha$ and $\beta$, as shown below. Therefore,  $I_{\theta/2}$ corresponds to this arc.

\begin{figure}[H]
    \centering
    \includegraphics[width=8cm, height=6cm]{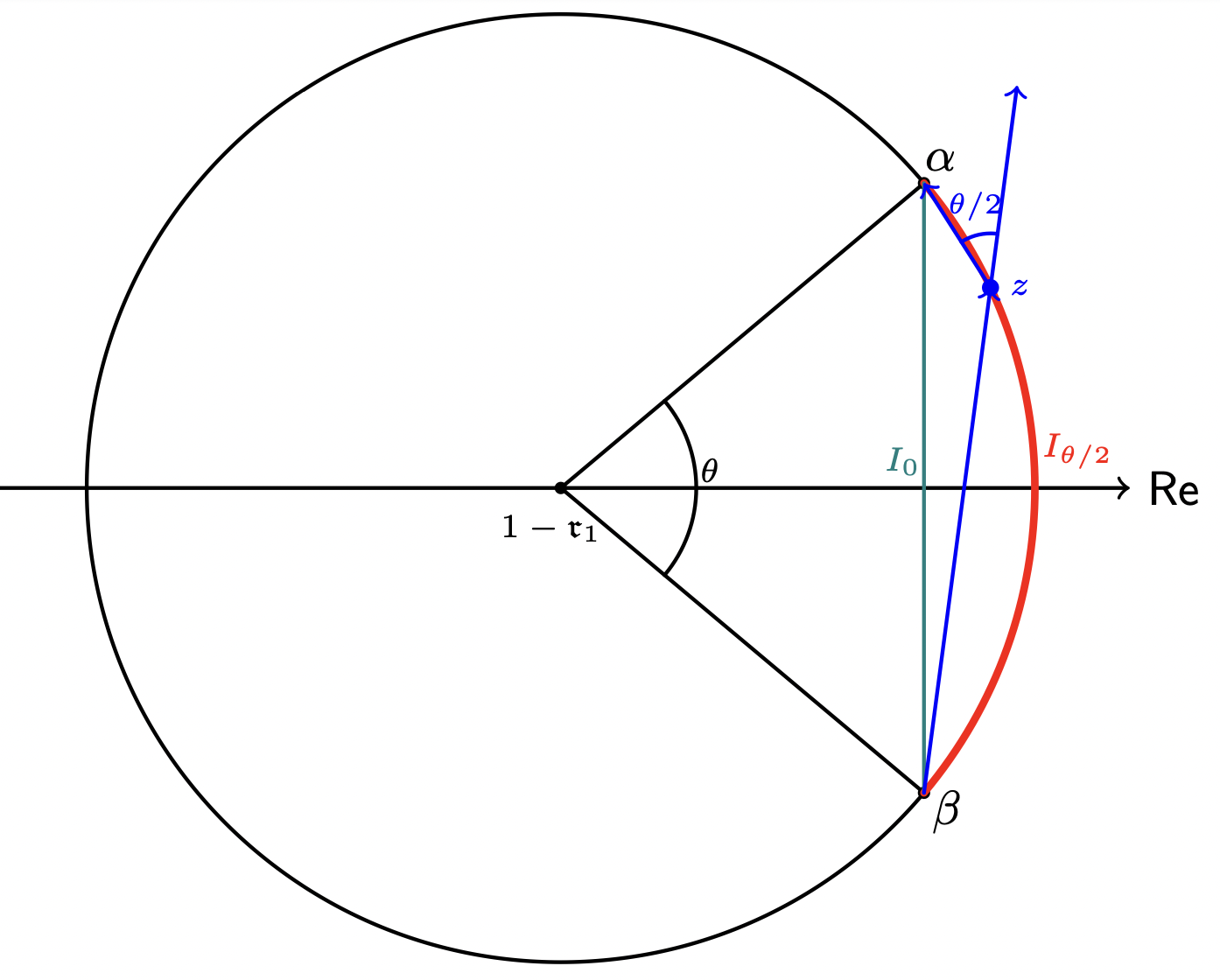} %\rev{reduced from 10x8}  
    \caption{Chord $I_0$ and arc $I_{\theta/2}$ on circle $\partial D_{\nu_1}(1 - \nu_1)$.}
    \label{fig:enter-label}
\end{figure}

We can again apply the inscribed angle theorem on the larger circle with center $1 - 2\nu_1$ and radius $2\nu_1\cos(\theta/4)$ to determine the arc that $I_{\theta/4}$ corresponds to, as shown below. 

\begin{figure}[H]
    \centering
    \includegraphics[width=8cm, height=8cm]{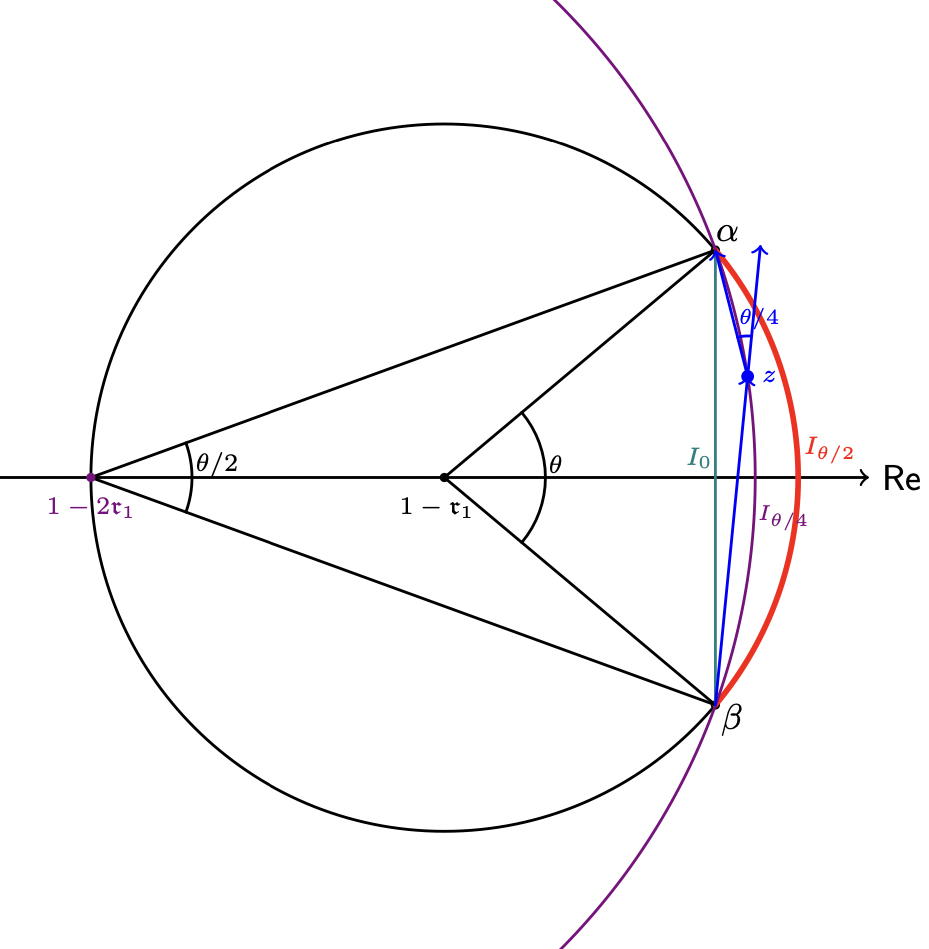} %\rev{reduced from 13x11}  
    \caption{Arc $I_{\theta/4}$ on circle $\partial D_{2\nu_1\cos(\theta/4)}(1 - 2\nu_1)$.}
    \label{fig:enter-label}
\end{figure}

Therefore, $I_{\theta/4}$ is the arc with endpoints $\alpha, \beta$ on the circle $\partial D_{2\nu_1\cos(\theta/4)}(1 - 2\nu_1)$. \\

In the following lemma, $I_0, I_{\theta/2}$, and $I_{\theta/4}$ are mapped to three infinite vertical lines on the complex plane. Then, the Hadamard three-line theorem is applied to upper-bound the maximum modulus of a function on $I_{\theta/4}$ using the maximum modulus of the same function on $I_0$ and $I_{\theta/2}$. This technique can be applied to any complex circle, not just $\partial D_{\nu_1}(1 - \nu_1)$. 

\begin{lemma}\label{lem:BE4.3}\label{lem:threeline} \textup{(Modified Lemma 4.3 in \cite{BE97}).} 
Consider the complex circle with radius $r$ and center $c$. Let $0 < \theta < \pi$,  $\alpha = r\left[\cos(\theta/2) + i \sin(\theta/2) \right] + c$, and $\beta = r\left[\cos(\theta/2) - i \sin(\theta/2) \right] + c$. Let $I_t = \left\{ z \in \C \mid \arg\left(\frac{\alpha - z}{z - \beta} \right) = t \right\}$. \\
Suppose $g$ is an analytic function \rev{in the open region bounded by $I_{0}$ and $I_{\theta/2}$}, and suppose $g$ is continuous on the closed region between $I_0$ and $I_{\theta/2}$. Then
\begin{equation}
   \max_{z \in I_{\theta/4}} \vert g(z) \vert \leq \left(\max_{z \in I_0} \vert g(z) \vert \right)^{1/2} \left(\max_{z \in I_{\theta/2}} \vert g(z) \vert \right)^{1/2}. 
\end{equation}
\end{lemma}

\begin{proof}
We can apply Lemma 4.2 in \cite{BE97}, which stems from the Hadamard three-line theorem:
\begin{lemma}\label{lem:BE4.2}
    \textup{(Lemma 4.2 in \cite{BE97}).} Let $t > 0$ and 
    \begin{equation}
        E_t := \{z \in \C \mid 0 \leq \operatorname{Im}(z) \leq t\}.
    \end{equation}
    \rev{Suppose $g$ is an analytic function in the interior of $E_{t}$, and suppose $g$ is continuous on $E_{t} \cup \infty$. Then} 
    \begin{equation}
        \rev{\max_{\{z \mid \operatorname{Im}(z) = t/2\}} \vert g(z) \vert \leq \left(\max_{\{z \mid \operatorname{Im}(z) = 0\}} \vert g(z) \vert \right)^{1/2} \left(\max_{\{z \mid \operatorname{Im}(z) = t\}} \vert g(z) \vert \right)^{1/2}.}
    \end{equation}

\end{lemma}
\noindent Let $w := \log\left(\frac{\alpha - z}{z - \beta}\right)$. Given $z \in I_{\theta/4}$, note that 
\begin{equation}
 \log\left(\frac{\alpha - z}{z - \beta}\right) = \log\left(\left\vert\frac{\alpha - z}{z - \beta}\right\vert\right) + i\arg\left(\frac{\alpha - z}{z - \beta}\right) = \log\left(\left\vert\frac{\alpha - z}{z - \beta}\right\vert\right) + \frac{i\theta
}{4} \in \{z \mid \operatorname{Im}(z) = \theta/4\}.   
\end{equation}
Similarly, given $z \in I_0$, $w \in \{z \mid \operatorname{Im}(z) = 0\}$, and given $z \in I_{\theta/2}$, $w \in \{z \mid \operatorname{Im}(z) = \theta/2\}.$
Now, we define a function $h$ such that $h(y) = g(w^{-1}(y))$. We thus have that 
\begin{align}
    \max_{z \in I_{\theta/4}} \vert g(z) \vert &= \max_{\{w \mid \operatorname{Im}(w) = \theta/4\}} \vert h(w) \vert \\
    &\leq \left(\max_{\{w \mid \operatorname{Im}(w) = 0\}} \vert h(w) \vert \right)^{1/2} \left(\max_{\{w \mid \operatorname{Im}(w) = \theta/2\}} \vert h(w) \vert \right)^{1/2} \tag{by \Cref{lem:BE4.2} \rev{with $t = \frac{\theta}{2}$}}\\
    &= \left(\max_{z \in I_0} \vert g(z) \vert \right)^{1/2} \left(\max_{z \in I_{\theta/2}} \vert g(z) \vert \right)^{1/2}. \qedhere
\end{align}
\end{proof}

% \ryan{I just noticed that BE has a generalized version of Lemma 4.1, which they call Lemma 4.9, which very much resembles our generalization.  (Though it has a messy-looking statement, and their proof is just ``trivial generalization''.)}
% \ryan{I prefer the following organization; I wonder if you could fill it in\dots}
The next theorem we will need is only a slight generalization of \cite[Lemma~4.1]{BE97}, but we prefer to give a full proof here, as the one in \cite{BE97} has a significant typo. 
\begin{theorem}\label{lem:BE4.1}
    Let $0 < \delta \leq \frac{1}{4}$, $0 < a < \pi$, $\lambda \geq 1$, and $M \geq 1$. Let $g$ be a continuous function on the unit disk, analytic on the interior, satisfying the following properties:
    \begin{equation}
        |g(\delta)| \geq 1, \quad\text{and}\quad
        |z| \leq 1-\gamma \ \Rightarrow\  |g(z)| \leq \frac{\lambda M}{\gamma}
    \end{equation}
    for $0 < \gamma \leq 1$. \\
    Let $\Gamma$ be the circle with center~$\delta$  and radius $1-\delta$. 
    Let $J$ be the closed arc of~$\Gamma$ with midpoint~$1$ and arc length~$a$.
    Then 
    \begin{equation}
       \max_{z \in J} |g(z)| \geq \left(\frac{\delta}{\lambda M}\right)^{O(1/a)}. 
    \end{equation}
\end{theorem}
\begin{proof}
\rev{The strategy is to construct a ``symmetrized'' function $h$ by evaluating $g$ at $2m$ equally spaced points on $\Gamma$. Since $g$ is large at the center $\delta$ and $h(\delta) = g(\delta)^{2m}$, the Maximum Modulus principle forces $h$ to be large somewhere on $\Gamma$. But the equally spaced points are arranged so that most of them miss the short arc $J$; at those distant points, $|g|$ is small by the bound $|g(z)| \leq \lambda M / \gamma$ for $|z| \leq 1-\gamma$. This forces the few factors of $h$ corresponding to points on $J$ to be large, yielding the desired lower bound.}

Let $2m$ be the smallest even integer greater than or equal to $\frac{4\pi}{a}$. We define $2m$ equally spaced points on $\Gamma$ such that the arc length between two adjacent points is approximately (but not larger than) $\frac{a}{2}$: 
\begin{equation}
  \eta_k := \delta + \left(1 - \delta\right)\xi^k, \qquad k \in \Z, -m \leq k \leq m,  
\end{equation}
where $\xi \coloneqq \exp(\frac{2\pi i}{2m})$ is the first $(2m)$-th root of unity. Note that the arc with endpoints $\eta_{-1}$ and $\eta_{1}$ that passes through $\eta_0$ is entirely contained within arc $J$. Also note that $\eta_{-m}$ and $\eta_{m}$ correspond to the same point. 

\rev{
\begin{figure}[ht]
\centering
\FigEquallySpacedPoints
\caption{Construction of equally spaced points $\eta_k$ on circle $\Gamma$ (center $\delta$, radius $1-\delta$) used in the proof of \Cref{lem:BE4.1}.}
\label{fig:equally_spaced_points}
\end{figure}
}

First, we will show that there exists a constant $c > 0$ such that for all $k \in \{1, ..., m - 1\}$, 
\begin{equation}\label{eq:mod_z_final_bound}
    \vert z \vert \leq 1 - c\delta(ka)^2 
\end{equation}
when $z$ is on the arc with endpoints $\eta_k$ and $\eta_{k +1}$ or $\eta_{- k}$ and $\eta_{-(k+1)}$. \\
This is because WLOG, assume that $z$ is on the arc with endpoints $\eta_k$ and $\eta_{k + 1}$. Then $z$ is of the form 
\begin{equation}
    \delta + \left(1 - \delta\right)e^{i\theta}, \quad \theta \in \left[\frac{\pi k}{m}, \frac{\pi(k + 1)}{m}\right].
\end{equation}
This means that 
\begin{equation}
    \vert z \vert^2 = 1 - 2\delta\left(1 - \delta\right)(1 - \cos\theta).
\end{equation}
Since $\vert z \vert \leq 1$, 
\begin{equation}\label{eq:mod_z_bound}
    \vert z \vert \leq \frac{1 + \vert z \vert^2}{2} = 1 - \delta\left(1 - \delta\right)(1 - \cos\theta) \leq 1 - \frac{3}{4}\delta(1 - \cos\theta).
\end{equation}
Since $\theta \in [0, \pi]$, 
\begin{equation}\
   1 - \cos\theta \geq \frac{2}{\pi^2}\theta^2 \geq \frac{2k^2}{m^2}. 
\end{equation}
Since $2m \leq c_1\frac{4\pi}{a}$ for some $c_1 > 0$:
\begin{equation}\label{eq:1-costheta_lower_bound}
     1 - \cos\theta \geq \frac{c_1}{2\pi^2} \cdot (ka)^2.
\end{equation}
Combining \Cref{eq:mod_z_bound} with \Cref{eq:1-costheta_lower_bound} gives us \Cref{eq:mod_z_final_bound}.

Now, we define the following function 
\begin{equation}
    h(z) := \prod_{j = -m}^{m-1} g\left(\delta + \xi^j\left(z - \delta \right)\right).  
\end{equation}

When $z \in \Gamma$, the $2m$ points plugged into $g$ to evaluate $h(z)$ are equally spaced on $\Gamma$. The arc length between two adjacent points is 
\begin{equation}
    \frac{2\pi}{2m}\left(1 - \delta \right) > \frac{a}{3}\left(1 - \delta \right) \geq \frac{a}{4} \tag{since $2m < \frac{4\pi}{a} + 2 \leq \frac{6\pi}{a}$}.     
\end{equation}
Therefore, the arc length between two adjacent points is strictly greater than a fourth of the arc length of $J$. This means that at most four of these points lie on arc $J$. 

Consider the remaining $2m - 4$ points. Each of these points lie on a distinct arc with endpoints $\eta_j$ and $\eta_{j+1}$, or endpoints $\eta_{-j}$ and $\eta_{-(j+1)}$, for $j = \{2, ..., m -1\}$. By \Cref{eq:mod_z_final_bound} and the properties of $g$, any point $z$ on the arc with endpoints $\eta_j$ and $\eta_{j+1}$, or endpoints $\eta_{-j}$ and $\eta_{-(j+1)}$, satisfies
\begin{equation}
    \vert g(z) \vert \leq \frac{\lambda M}{c\delta(ja)^2}. 
\end{equation}
Putting this all together,
\begin{align}
    \max_{z \in \Gamma} \vert h(z) \vert &\leq \left(\max_{z \in J} \vert g(z) \vert\right)^4 \cdot \prod_{k = 2}^{m -1} \left( \frac{\lambda M}{c\delta(ka)^2} \right)^2 \nonumber \\
    &= \left(\max_{z \in J} \vert g(z) \vert\right)^4 \left(\frac{\lambda M}{\delta}\right)^{2(m-2)} \left(\frac{1}{c_1a}\right)^{4(m-2)} ((m - 1)!)^{-4} \nonumber \\
    &\leq \left(\max_{z \in J} \vert g(z) \vert\right)^4 \left( \frac{\lambda M}{\delta} \right)^{2(m - 2)} \left( \frac{m}{c_1\pi} \right)^{4(m - 2)} \left(\frac{e}{m - 1}\right)^{4(m-1)} \tag{by Stirling's approximation, and since $a \geq \frac{2\pi}{m}$} \\
    &\leq \left(\max_{z \in J} \vert g(z) \vert\right)^4 \left(\frac{\lambda Me}{c_1\delta\pi} \right)^{2(m-2)} \left(\frac{m}{m - 1}\right)^{4(m-1)} e^{2m} \nonumber \\
    &\leq \left(\max_{z \in J} \vert g(z) \vert\right)^4 \left(\frac{\lambda Me}{c_1\delta \pi} \right)^{2(m-2)} e^{2m + 4} \tag{since $\left(1 + \frac{1}{n}\right)^n \to e$ as $n \to \infty$} \\
    &= \left(\max_{z \in J} \vert g(z) \vert\right)^4 \, \left(\frac{\lambda M}{\delta}\right)^{O(1/a)}.
\end{align}

Now, by the Maximum Modulus principle \rev{(\Cref{fact:mmp})}, 
\begin{equation}
   \left\vert g\left(\delta \right) \right\vert^{2m} = \left\vert h\left(\delta \right) \right\vert \leq \max_{z \in \Gamma} \vert h(z) \vert \leq  \left(\frac{\lambda M}{\delta}\right)^{O(1/a)} \left(\max_{z \in J} \vert g(z) \vert\right)^4.  
\end{equation}
So 
\begin{equation}
    \left(\max_{z \in J} \vert g(z) \vert\right)^4 \geq \left(\frac{\delta}{\lambda M}\right)^{O(1/a)} \left\vert g\left(\delta \right) \right\vert^{2m} \geq  \left(\frac{\delta}{\lambda M}\right)^{O(1/a)} \left(1\right) = \left(\frac{\delta}{\lambda M}\right)^{O(1/a)}. 
\end{equation}
Therefore 
\begin{equation*}
    \max_{z \in J} \vert g(z) \vert \geq \left(\frac{\delta}{\lambda M}\right)^{O(1/a)}. \qedhere    
\end{equation*}
\end{proof} 

Note that if we instead have that $g(\delta) \geq \kappa$ for some $0 < \kappa \leq 1$, then 
\begin{equation}
    \max_{z \in J} \vert g(z) \vert \geq \left(\frac{\delta\kappa}{\lambda M}\right)^{O(1/a)}.
\end{equation} 
This follows from replacing $g$ with $g/\kappa$ and $M$ with $M/\kappa$ in \Cref{lem:BE4.1}. \\

In our proof, we will use the following corollary, which provides a lower bound for $g$ on the arc of a circle with a slightly smaller radius that has a midpoint less than $1$.

\rev{
\begin{figure}[ht]
\centering
\FigGeneralizedLemma
\caption{The generalized setting of \Cref{cor:BE4.1}. The red arc $P$ lies on the circle $\Gamma_\mu$ of radius radius $1-\delta-\mu$, centered at $\delta$. The dashed arc on the outer circle $\Gamma$ has the same central angle as $P$.}
\label{fig:generalized_lemma}
\end{figure}
}
\begin{corollary}\label{cor:BE4.1}
    Let $0 < \delta \leq \frac{1}{4}$, $0 < a < \pi$, $\lambda \geq 1$, $M \geq 1$, and $0 < \kappa \leq 1$. Let $0 \leq \mu < 1 - \delta$. Let $g$ be a continuous function on the unit disk, analytic on the interior, satisfying the following properties:
    \begin{equation}
        |g(\delta)| \geq \kappa, \quad\text{and}\quad
        |z| \leq 1-\gamma \ \Rightarrow\  |g(z)| \leq \frac{\lambda M}{\gamma}, 
    \end{equation}
    where $\gamma \leq 1 - \delta$. \\
    Let $\Gamma_\mu$ be the circle with center~$\delta$  and radius $1-\delta - \mu$. 
    Let $P$ be the closed arc of~$\Gamma_\mu$ that is symmetric with respect to the real axis and has arc length~$a$.
    Then 
    \begin{equation}
        \max_{z \in P} |g(z)| \geq \left(\frac{\delta\kappa}{\lambda M}\right)^{O(1/a)}.
    \end{equation}
\end{corollary}
\begin{proof}
Consider the function 
\begin{equation}
    h(z) = g\left(\left(\frac{1 - \delta - \mu}{1 - \delta}\right)(z - \delta) + \delta\right). 
\end{equation}
First, note that $h(\delta) = g(\delta)$. Since $\vert g(\delta) \vert \geq \kappa$, we have that $\vert h(\delta) \vert \geq \kappa$. \\
Next, assume $\vert z \vert \leq 1 - \gamma$.
\begin{align*}
    \left\vert \left(\frac{1 - \delta - \mu}{1 - \delta}\right)(z - \delta) + \delta \right\vert &\leq  \left(\frac{1 - \delta - \mu}{1 - \delta}\right)\vert z \vert + \delta\left(1 - \frac{1 - \delta - \mu}{1 - \delta}\right) \tag{by triangle inequality} \\
    &\leq  \left(1 - \frac{\mu}{1 - \delta}\right)(1 - \gamma) + \delta\left(\frac{\mu}{1 - \delta}\right) \\
    &= 1 - \gamma - \left(\frac{\mu}{1 - \delta}\right)(1 - \gamma - \delta) \\
    &\leq 1 - \gamma \tag{since $1 - \gamma - \delta \geq 0$}.
\end{align*}
Therefore, 
\begin{equation}
    \vert h(z) \vert = \left\vert g\left(\left(\frac{1 - \delta - \mu}{1 - \delta}\right)(z - \delta) + \delta\right) \right\vert \leq \frac{\lambda M}{\gamma}.
\end{equation}
So $\vert z \vert \leq 1 - \gamma \Rightarrow \vert h(z) \vert \leq \frac{\lambda M}{\gamma}$. \\

As defined in \Cref{lem:BE4.1}, let $\Gamma$ be the circle with center $\delta$ and radius $1 - \delta$. When $z \in P$, $\left(\frac{1 - \delta - \mu}{1 - \delta}\right)(z - \delta) + \delta$ lies on the arc of $\Gamma$ with midpoint $1$ and the same central angle as $P$. Its arc length equals $\frac{1 - \delta}{1 - \delta - \mu}a \geq a$. Therefore, we can apply \Cref{lem:BE4.1} to get 
\begin{equation*}
   \max_{z \in P} \vert g(z) \vert = \max_{z \in J} \vert h(z) \vert \geq \left(\frac{\delta\kappa}{\lambda M}\right)^{O(1/a)}. \qedhere 
\end{equation*} 
\end{proof}

We will now prove our main result, which generalizes Theorem 3.1 in \cite{BE97} to complex circles beyond the unit circle.

\begin{theorem}\label{thm:BE3.1}
    \textup{(Generalized Theorem 3.1 in \cite{BE97}).} Let $0 < \theta < \pi$, $0 < \nu_1 \leq 1$, and $M \geq 1$. Let $A$ be an arc of $\partial D_{\nu_1}(1 - \nu_1)$ with central angle $\theta$ that is symmetric with respect to the real axis, passing through the point 1. Then 
    \begin{equation}
        \max_{z \in A} \vert f(z) \vert \geq \left( \frac{1}{M}\right)^{O\left(\frac{1}{\nu_1\theta}\right)}
    \end{equation}
    for all $f \in \calC_{\frac12, 1}$.
\end{theorem}

\begin{proof}
Note that the two endpoints of $A$ are $\alpha$ and $\beta$, as defined in \Cref{eq:alpha_and_beta}.

\rev{The proof proceeds in four steps (see \Cref{fig:proof_geometry_1,fig:proof_geometry_2} for a geometric overview):
\begin{enumerate}
    \item \textbf{Multiply $f$ by a factor that is zero at arc endpoints.} Define $g(z) = (z-\alpha)(z-\beta)f(z)$. The factor $(z-\alpha)(z-\beta)$ is zero at the arc endpoints and small near them, controlling the growth of $g$ near the boundary while ensuring it remains in a similar function class $\mathcal{C}$ with adjusted constants.
    \item \textbf{Bound $|g|$ on the chord $I_0$.} The chord connecting $\alpha$ and $\beta$ serves as the ``inner boundary.'' We bound $|g|$ on $I_0$ by combining the smallness of $|(z-\alpha)(z-\beta)|$ near the boundary with the property $|f(z)| \leq M/(1-|z|)$.
    \item \textbf{Interpolate via Hadamard three-line.} Apply \Cref{lem:BE4.3} to bound $|g|$ on $I_{\theta/4}$ in terms of $|g|$ on $I_0$ and on $A = I_{\theta/2}$, then use the Maximum Modulus principle on the region $G$ between $I_{\theta/4}$ and $A$.
    \item \textbf{Apply \Cref{cor:BE4.1} on a nearby arc.} Construct an arc $P$ passing through $G$ and obtain a lower bound on $|g|$ restricted to $P$, which transfers to $A$.
\end{enumerate}}

We define a new function $g$, where 
\begin{equation}
    g(z) = (z - \alpha)(z - \beta)f(z).
\end{equation} In this proof, we will derive a lower bound for the maximum modulus of $g$ on arc $A$, and then translate that to a lower bound for $f$. Note that since $f \in \calC_{\frac12, 1}$,
\begin{equation}
   \vert z \vert \leq 1 - \gamma \ \Rightarrow\ \vert g(z) \vert \leq \frac{M \vert (z - \alpha)(z - \beta) \vert}{\rev{1 - |z|}}  \leq \frac{4M}{\rev{1 - |z|}} \qquad \text{for } 0 < \gamma \leq 1, 
\end{equation} and 
\begin{equation}
    \textstyle \left\vert g\left(\frac{1}{4M}\right) \right\vert \geq \left(\frac{9}{32}\right)\left\vert f\left(\frac{1}{4M}\right) \right\vert \geq \left(\frac{9}{32}\right)\left(\frac{1}{2}\right) \geq \frac{1}{8}.
\end{equation}
% Next, we wish to find a lower bound for $\vert g(\frac{1}{4M}) \vert \geq \frac18$ given that $\vert f(\frac{1}{4M}) \vert \geq \frac{1}{2}$. \\
% Note that $\vert \left(\frac{1}{4M} - \alpha\right)\left(\frac{1}{4M} - \beta\right) \vert$ is a convex quadratic in $\frac{1}{4M}$ with a global minimum at its vertex: $\nu_1\cos(\theta/2) + 1 - \nu_1$. Therefore, over the range $0 < \frac{1}{4M} \leq \frac{1}{4}$, $\vert \left(\frac{1}{4M} - \alpha\right)\left(\frac{1}{4M} - \beta\right) \vert$ is minimized either at the vertex if $\nu_1\cos(\theta/2) + 1 - \nu_1 \leq \frac{1}{4}$, or at the point $\frac{1}{4M} = \frac{1}{4}$. \\
% When $\frac{1}{4M} = \frac{1}{4}$, by taking partial derivatives and finding their roots, we find that $\left\vert \left(\frac{1}{4} - \alpha\right)\left(\frac{1}{4} - \beta\right) \right\vert$ is minimized when $\nu_1 = \frac{3}{8}$, $\theta = \pi$. So $\vert (\frac{1}{4} - \alpha)(\frac{1}{4} - \beta) \vert \geq \frac{9}{32}$. \\
% When the vertex $\nu_1\cos(\theta/2) + 1 - \nu_1 \leq \frac{1}{4}$, we have that $\left\vert \left(\frac{1}{4M} - \alpha\right)\left(\frac{1}{4M} - \beta\right) \right\vert > \frac{9}{32}$. Therefore, 
% \[ \left\vert g\left(\frac{1}{4M}\right) \right\vert \geq \left(\frac{9}{32}\right)\left\vert f\left(\frac{1}{4M}\right) \right\vert \geq \left(\frac{9}{32}\right)\left(\frac{1}{2}\right) \geq \frac{1}{8}. \]
Therefore, $g \in \calC_{\frac18, 4}$. 

\begin{figure}[ht]
\centering
\includegraphics[width=0.5\textwidth]{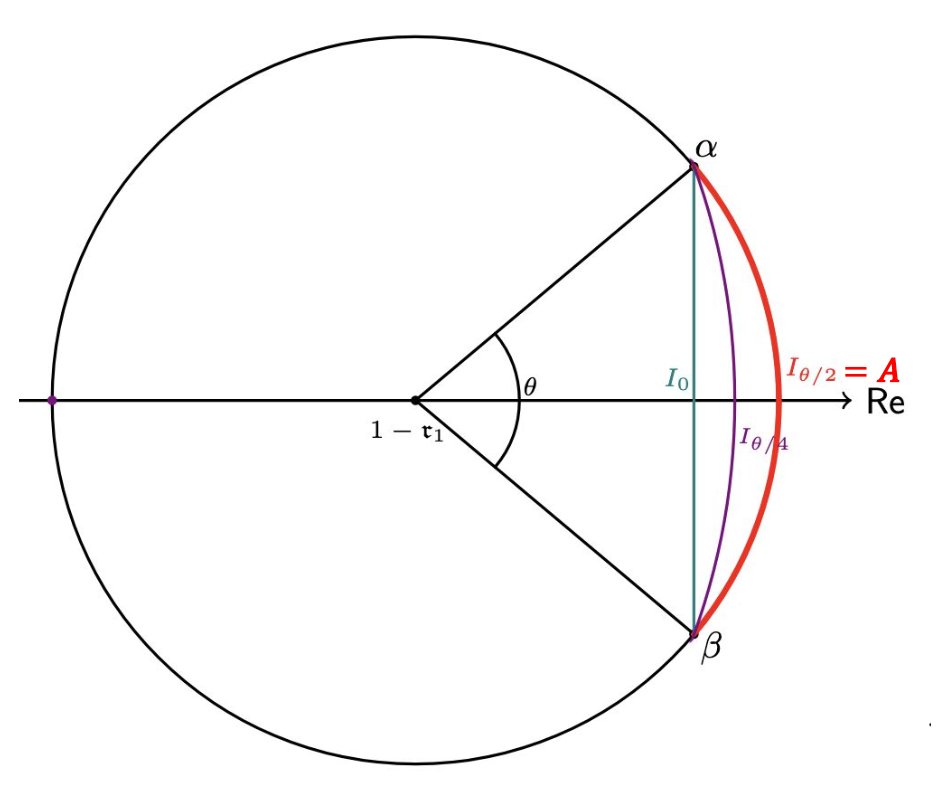}
\caption{Setup for the proof of \Cref{thm:BE3.1}. The red arc $A = I_{\theta/2}$ is the target arc on $\partial D_{\nu_1}(1-\nu_1)$. The teal chord $I_0$ connects endpoints $\alpha, \beta$. The violet arc $I_{\theta/4}$ lies on a larger circle through $\alpha, \beta$.}
\label{fig:proof_geometry_1}
\end{figure}

First, we wish to find a lower bound on $\vert g(z) \vert$ for $z \in I_0$. Recall that $I_0$ is the chord with endpoints $\alpha$ and $\beta$, as defined in \Cref{lem:BE4.3}. We split into two cases: when $0 < \nu_1 \leq \frac{3}{4}$, and when $\frac{3}{4} < \nu_1 \leq 1$. 
\paragraph{Case 1. $0 < \nu_1 \leq \frac{3}{4}.$} 
Note that on $I_0$, $\vert (z - \alpha)(z - \beta) \vert$ is maximized when $z$ is the midpoint of $I_0$, i.e. $z = \nu_1\cos(\theta/2) + 1 - \nu_1$. With this value of $z$, 
\begin{equation}
    \vert (z - \alpha)(z - \beta) \vert = \vert \left( -\nu_1 \sin(\theta/2)i \right) \left(\nu_1 \sin(\theta/2)i \right)\vert = \nu_1^2 \sin^2(\theta/2).
\end{equation}
Note that $1 - \vert z \vert$ is minimized when $z$ is either $\alpha$ or $\beta$. WLOG, assume $z = \alpha$. Then 
\begin{equation}
    1 - \vert z \vert = 1 - \sqrt{1 - 2\nu_1(1-\nu_1)(1-\cos(\theta/2))}.
\end{equation}
Now, the Taylor series of $1 - \sqrt{1 - x}$ is at least $\frac{x}{2}$ if $x$ is non-negative.
Since $2\nu_1(1 - \nu_1)(1 - \cos(\theta/2)) \geq 0$ when $0 \leq \theta \leq \pi$, we have that 
\begin{equation}
   1 - \vert z \vert \geq \nu_1(1 - \nu_1)(1 - \cos(\theta/2)). 
\end{equation}
So for $z \in I_0$, 
\begin{align}
    \vert g(z) \vert &\leq \frac{M\nu_1^2\sin^2(\theta/2)}{\nu_1(1 - \nu_1)(1 - \cos(\theta/2))} \nonumber \\
    &= \frac{M\nu_1\sin^2(\theta/2)}{(1 - \nu_1)(1 - \cos(\theta/2))} \nonumber \\
    &= \frac{M\nu_1\sin^2(\theta/2)}{(1-\nu_1)(2\sin^2(\theta/4))} \nonumber \\
    &= \frac{2M\nu_1\cos^2(\theta/4)}{1-\nu_1} \nonumber \\
    &\leq \frac{2M\nu_1}{1- \nu_1} \tag{since $\cos^2(\theta/4) \leq 1$ for $0 \leq \theta \leq \pi$} \\
    &\leq 8M\nu_1. \label{eq:I_0_case_1}
\end{align}

\paragraph{Case 2. $\frac{3}{4} < \nu_1 \leq 1.$} \, \\
WLOG, consider a point $z \in I_0$ whose distance from $\alpha$ is $\tau \in [0, \sin(\theta/2)]$. \\ 
Note that $z = \nu_1\cos(\theta/2) + 1 - \nu_1 + (\nu_1\sin(\theta/2) - \tau)i$. This means that
\begin{align}
    \vert z \vert &= \sqrt{(1 - \nu_1 + \nu_1\cos(\theta/2))^2 + (\nu_1\sin(\theta/2) - \tau)^2} \nonumber \\
    &= \sqrt{1 - 2\nu_1(1 - \nu_1)(1 - \cos(\theta/2)) - 2\tau\nu_1\sin(\theta/2) + \tau^2}.
    \intertext{Again, by Taylor series expansion, $1 - \sqrt{1 - x} \geq \frac{x}{2}$ for non-negative $x$. Therefore: }
    1 - \vert z \vert &\geq \nu_1(1 - \nu_1)(1 - \cos(\theta/2)) + \tau\nu_1\sin(\theta/2) - \frac{\tau^2}{2} \nonumber \\
    &\geq \tau\nu_1\sin(\theta/2) - \frac{\tau^2}{2} \nonumber \\
    &\geq \tau\nu_1\sin(\theta/2) - \frac{\tau\sin(\theta/2)}{2} \nonumber \\
    &\geq \frac{\tau\sin(\theta/2)}{4}.
\end{align}
Since $\vert (z - \alpha) (z - \beta) \vert = \tau \vert (z - \beta) \vert \leq 2\tau,$ we have that for $z \in I_0$,
\begin{equation}\label{eq:I_0_case_2}
    \vert g(z) \vert \leq \frac{8M}{\sin(\theta/2)}.
\end{equation}

Now, we can apply \Cref{lem:BE4.3} to \Cref{eq:I_0_case_1} and \Cref{eq:I_0_case_2}. Let $L = \displaystyle\max_{z \in I_{\theta/2}} \vert g(z) \vert = \max_{z \in A} \vert g(z) \vert$, since $I_{\theta/2}$ and $A$ correspond to the same arc. Then:
\begin{equation}
    \max_{z \in I_{\theta/4}} \vert g(z) \vert \leq \begin{cases}
\left(8M\nu_1L \right)^{1/2}, & \text{when } 0 < \nu_1 \leq \frac{3}{4} \\
\left(\frac{8ML}{\sin(\theta/2)} \right)^{1/2}, & \text{when } \frac{3}{4} < \nu_1 \leq 1
\end{cases},
\end{equation}

Next, we define $G$ to be the open region bounded by $I_{\theta/4}$ and $A$. By the Maximum Modulus principle \rev{(\Cref{fact:mmp})}, we have that 
\begin{equation}\label{eq:L_bound_max_modulus}
    \max_{z \in G} \vert g(z) \vert \leq \begin{cases}
\max \left\{L, \left( 8M\nu_1L\right)^{1/2} \right\}, & \text{when } 0 < \nu_1 \leq \frac{3}{4} \\
\max \left\{L, \left( \frac{8ML}{\sin(\theta/2)} \right)^{1/2} \right\}, & \text{when } \frac{3}{4} < \nu_1 \leq 1
\end{cases}.
\end{equation}

\begin{figure}[ht]
\centering
\includegraphics[width=0.5\textwidth]{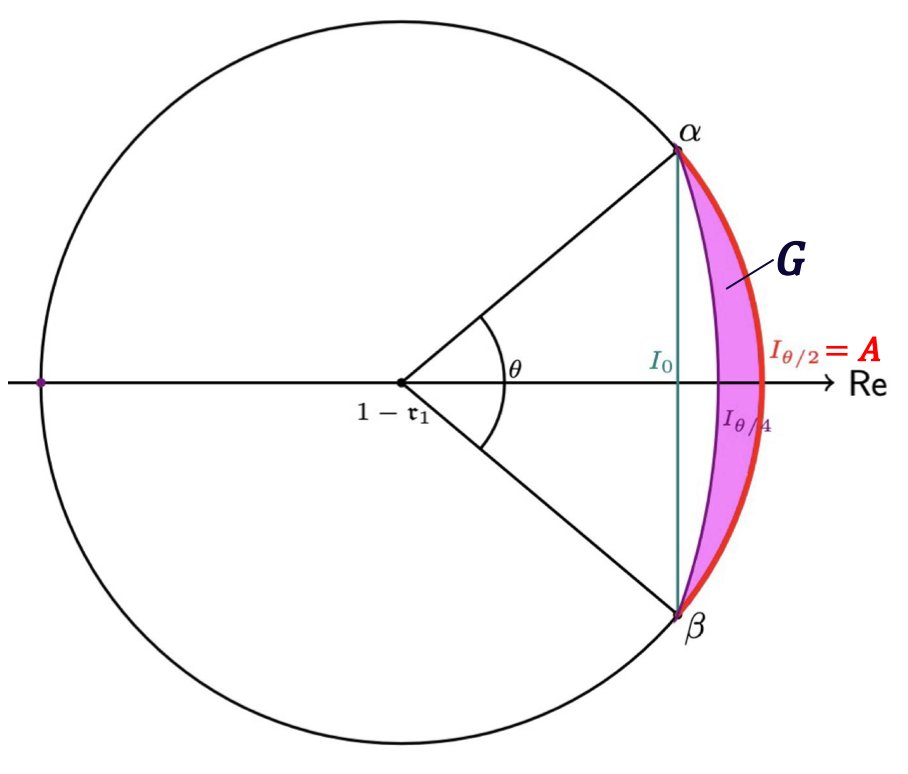}
\caption{The shaded region $G$ is bounded between $I_{\theta/4}$ and $A$. By the Maximum Modulus principle, $\max_{z \in G} |g(z)|$ is bounded by the maximum of $|g|$ on the boundary arcs $I_{\theta/4}$ and $A$.}
\label{fig:proof_geometry_2}
\end{figure}

Now, we want to define an arc that passes through the region $G$, so that we can apply \Cref{cor:BE4.1} and obtain a lower bound for the maximum modulus of $g$ on this arc. Note that $I_{\theta/4}$'s point of intersection with the real axis is
\begin{equation}
\textstyle
    1 - 2\nu_1 + 2\nu_1\cos(\theta/4) \leq 1 - 2\nu_1 + 2\nu_1\left(1 - \frac{\theta^2}{64}\right) = 1 - \frac{\nu_1\theta^2}{32}.
\end{equation}
Therefore, we define $\Gamma_P$ as the circle with center $\frac{1}{4M}$ and radius $1 - \frac{1}{4M} - \frac{\nu_1\theta^2}{64}$, as its point of intersection with the real axis is $1 - \frac{\nu_1\theta^2}{64} \in (1 - \frac{\nu_1\theta^2}{32}, 1)$. The choice of $64$ in the denominator is arbitrary; any constant exceeding $32$ is valid. We will now consider the arc $P := \Gamma_P \cap G$. 

Before applying \Cref{cor:BE4.1}, we first need to show that $P$ has arc length at least $\frac{\nu_1\theta}{c}$ for some constant $c \geq 1$. 

Let $a := \nu_1\theta$. Recall that the length of the chord $\vert I_0 \vert = 2\nu_1\sin(\theta/2)$. Therefore,
\begin{align*}
    \vert I_0 \vert &= \frac{2\sin(\theta/2)}{\theta}a \\
    &\geq \frac{2}{\pi} a &\tag{for $0 < \theta \leq \pi$}\\
    &\geq \frac{a}{c_1}. &\tag{for some $c_1 \geq 1$}
\end{align*}
So it suffices to show that $P$'s length is at least $\frac{|I_0|}{c_1}$ for some $c_1 \geq 1$.

Note that $I_{\theta/4}$ has radius $2\nu_1\cos(\theta/4) \leq 2\nu_1$. The radius of $P$ is 
\begin{equation}
\textstyle
   1 - \frac{1}{4M} - \frac{\nu_1\theta^2}{64} \geq \frac{3}{4} - \frac{\nu_1\theta^2}{64} \geq \frac{3}{4} - \frac{\nu_1}{4}. 
\end{equation}
We split into two cases: $\nu_1 \leq \frac{1}{3}$, and $\nu_1 > \frac{1}{3}$. 
\paragraph{Case 1. $\nu_1 \leq \frac{1}{3}$.} 
In this case, we have that $\frac{3}{4} - \frac{\nu_1}{4} \geq 2\nu_1$. This means that $I_{\theta/4}$ has a smaller radius (which corresponds to greater curvature) than $P$, and so the endpoints of arc $P$ are on arc $A$, and not on arc $I_{\theta/4}$. \\
\rev{Since the length of an arc is at least the length of the chord connecting its endpoints, it suffices to lower-bound $\vert P \vert$ by treating $P$ as a straight chord and computing its length.} Let $1 - t$ be the point at which this chord intersects the real axis ($t = \frac{\nu_1\theta^2}{64}$ based on how we've defined $P$). \\
Then, by the Pythagorean Theorem, $(\nu_1 - t)^2 + \left(\frac{1}{2}\vert P \vert \right)^2 = \nu_1^2$. This means that 
\begin{equation}
  \vert P \vert = 2\sqrt{t(2\nu_1 - t)} \geq 2 \sqrt{t(2\nu_1 - \nu_1)} = 2\sqrt{\nu_1t}.  
\end{equation}
Note that $I_0$ intersects with the real axis at the point $1 - \nu_1 + \nu_1\cos(\theta/2)$. Since $\cos(\theta/2) \geq 1 - \frac{(\theta/2)^2}{2} = 1 - \frac{\theta^2}{8}$, this means that the point of intersection with the real axis is $\geq 1 - \frac{\nu_1\theta^2}{8} = 1 - 8t$. \\
Therefore, 
\begin{equation}
    \vert I_0 \vert \leq 2\sqrt{t(16\nu_1 - t)} \leq 2\sqrt{16\nu_1t} = 4\vert P \vert.
\end{equation}

\paragraph{Case 2. $\nu_1 > \frac{1}{3}$.} 
In this case, $I_{\theta/4}$ has a larger radius than arc $P$, and so the endpoints of arc $P$ are on arc $I_{\theta/4}$ and not on arc $A$. \\
Note that $I_{\theta/4}$'s radius is $2\nu_1\cos(\theta/4) \leq 2$, and $P$'s radius is $1 - \frac{1}{4} - \frac{\nu_1\theta^2}{64} \geq \frac{1}{2}$. \rev{To obtain a lower bound on $\vert P \vert$, we substitute the radii that minimize it: radius $\frac{1}{2}$ for $P$ and $2$ for $I_{\theta/4}$.} Recall that with $t = \frac{\nu_1\theta^2}{64}$, $P$ intersects the real axis at point $1 - t$, and $I_{\theta/4}$ intersects the real axis at point $1 - 2t$. Therefore, $P$ intersects $I_{\theta/4}$ at the points 
\begin{equation}
    x = \frac{-3t^2-5t+3}{2t+3}, y = \pm \frac{\sqrt{-t(t-1)(t+3)(t+4)}}{2t+3}.
\end{equation}
Note that we can approximate $\frac{1}{2}\vert P \vert$ by calculating the length of the hypotenuse between point $1 - t$ and point $(x, +y)$. By the Pythagorean theorem, 
\begin{equation}
   \left(\frac{1}{2} \vert P \vert \right)^2 \geq (1 - t - x)^2 + y^2 = \frac{t(t+4)}{2t+3} \geq t. 
\end{equation}
Therefore, $\vert P \vert \geq 2\sqrt{t}$. Recall that 
\begin{equation}
   \vert I_0 \vert \leq 2\sqrt{16\nu_1t} \leq 2\sqrt{16t} = 4\vert P \vert.  
\end{equation}

Therefore, in both cases, we've established that the arc length of $P$ is at least $\frac{a}{c}$ for some $c \geq 1$. We can now apply \Cref{cor:BE4.1} with our function $g \in \calC_{\frac18, 4}$, and with $\delta = \frac{1}{4M}$ and $\mu = \frac{\nu_1\theta^2}{64}$. \rev{Note that the hypotheses of \Cref{cor:BE4.1} are satisfied since $\delta = \frac{1}{4M} \leq \frac14$ (as $M \geq 1$), the arc length of $P$ is at least $\frac{a}{c} > 0$, and $\mu = \frac{\nu_1\theta^2}{64} < 1 - \delta$ for $\theta \in (0, \pi)$}. This gives us: 
\begin{equation}\label{eq:result_from_4.1}
  \max_{z \in P} \vert g(z) \vert  \geq \left( \frac{1}{M}\right)^{O\left(\frac{1}{\nu_1\theta}\right)}.  
\end{equation}

We can now combine \Cref{eq:L_bound_max_modulus} and \Cref{eq:result_from_4.1} to derive our lower bound for $L = \displaystyle \max_{z \in A} \vert g(z) \vert$. When $\nu_1 \leq \frac{3}{4}$, we have that
\begin{equation}
   \max \left\{L, \left(8M\nu_1L \right)^{1/2} \right\} \geq \max_{z \in P} \vert g(z) \vert  \geq \left( \frac{1}{M}\right)^{O\left(\frac{1}{\nu_1\theta}\right)}. 
\end{equation}

If $L \geq  \left(8M\nu_1L \right)^{1/2}$, then we have that $L \geq \left( \frac{1}{M}\right)^{O\left(\frac{1}{\nu_1\theta}\right)}$. Otherwise, we have that 
\begin{equation}
    L \geq \left( \frac{1}{M}\right)^{O\left(\frac{1}{\nu_1\theta}\right)}\left(\frac{1}{\nu_1}\right) \geq \left( \frac{1}{M}\right)^{O\left(\frac{1}{\nu_1\theta}\right)}.
\end{equation}

When $\nu_1 > \frac{3}{4}$, we have that
\begin{equation}
    \max \left\{L, \left(\frac{8ML}{\sin(\theta/2)} \right)^{1/2} \right\} \geq \max_{z \in P} \vert g(z) \vert \geq \left( \frac{1}{M}\right)^{O\left(\frac{1}{\nu_1\theta}\right)}.
\end{equation}

If $L \geq \left(\frac{8ML}{\sin(\theta/2)} \right)^{1/2}$, then we have that $L \geq \left( \frac{1}{M}\right)^{O\left(\frac{1}{\nu_1\theta}\right)}$. Otherwise, we have that 
\begin{equation}
    L \geq \left( \frac{1}{M}\right)^{O\left(\frac{1}{\nu_1\theta}\right)} \sin(\theta/2).
\end{equation}
Note that for $0 < \theta < \pi$:
\begin{equation}
    \sin(\theta/2) \geq \left(\frac{1}{4}\right)^{\left(\frac{1}{\theta}\right)} \geq \left(\frac{1}{M}\right)^{O\left(\frac{1}{\nu_1\theta}\right)}. 
\end{equation}
Therefore, in both cases, we have that 
\begin{equation}
    L \geq  \left( \frac{1}{M}\right)^{O\left(\frac{1}{\nu_1\theta}\right)}.
\end{equation}
Putting everything together, we have that 
\begin{equation*}
    \max_{z \in A} \vert f(z) \vert \geq \frac{1}{4} \max_{z \in A} \vert g(z) \vert = \frac{L}{4} \geq  \left( \frac{1}{M}\right)^{O\left(\frac{1}{\nu_1\theta}\right)}. \qedhere 
\end{equation*}
\end{proof}

Note that the following corollary holds for \Cref{thm:BE3.1}:
\begin{corollary}\label{cor:BE3.2}
    \textup{(Generalized Corollary 3.2 of \cite{BE97}).} Let $Q(z)$ be a polynomial with complex coefficients, $\sum_{j = 0}^n b_j z^j$, such that $\vert b_0 \vert = 1$ and all coefficients $\vert b_j \vert \leq M$. Let $0 < \nu_1 \leq 1$, and let $A$ be an arc of $\partial D_{\nu_1}(1 - \nu_1)$ with central angle $0 < \theta < \pi$ that is symmetric with respect to the real axis and passes through the point $1$. Then there is some $w \in A$ such that
    \begin{equation}
        \vert Q(w) \vert \geq \left( \frac{1}{M}\right)^{O\left(\frac{1}{\nu_1\theta}\right)}.
    \end{equation}
\end{corollary}
\begin{proof}
  Polynomials of this form are part of set $\calC_{\frac12, 1}$, so \Cref{thm:BE3.1} can be applied. 
\end{proof}

\rev{\subsection{Completing the proof of \Cref{thm:etafinalbound}}\label{subsec:mainproof}}
\rev{We now combine the setup from \Cref{eq:eta_gaussian_bound}--\eqref{eq:Qtilde_properties} with the complex-analytic lemmas from \Cref{subsec:lemmas} to complete the proof.}

\begin{proof}[\rev{Proof of \Cref{thm:etafinalbound}}]
\rev{The idea is to balance two competing effects: the exponential decay factor $\exp(-\Theta(\theta^2 n))$ in \Cref{eq:eta_gaussian_bound} (which penalizes large $\theta$) against the polynomial lower bound from \Cref{cor:BE3.2} (which grows as $\theta$ increases). The optimal tradeoff occurs at the following choice of $\theta_0$:}

Let 
\begin{equation}
    \theta_0 = k \cdot \frac{\nu_2^{2/3}\ln^{1/3}(1/\epsilon)}{(\nu_1(1-\nu_2^2)n)^{1/3}}, \quad \text{for some constant $k \in (0, 1)$}.
\end{equation}
\rev{Recall from \Cref{eq:Fc_def} that}
\begin{align}
    \rev{F_c} &= \max_{-\pi < \theta \leq \pi} \exp\left(-\frac{1-\nu_2^2}{8\nu_2^2} \theta^2n \right) \cdot
    \left\vert Q_c(\nu_1 e^{i\theta} + 1 - \nu_1)\right\vert \nonumber \\
    &\geq \max_{-\theta_0 < \theta \leq \theta_0} \exp\left(-\frac{1-\nu_2^2}{8\nu_2^2} \theta^2n \right) \cdot
    \left\vert Q_c(\nu_1 e^{i\theta} + 1 - \nu_1)\right\vert \nonumber \\
    &\geq \exp\left(-\frac{1-\nu_2^2}{8\nu_2^2} \theta_0^2n \right) \cdot \max_{-\theta_0 < \theta \leq \theta_0}
    \left\vert Q_c(\nu_1 e^{i\theta} + 1 - \nu_1)\right\vert. 
\end{align}

\rev{We first establish the relationship between $\delta$ and $\theta_0$. Setting $\nu_1 = \frac{2}{3}$ and $\nu_2 = r = 1-\delta$ as per \Cref{def:zflip}, we have (absorbing $k < 1$ into the bound): 
\begin{equation*}
    \theta_0^3 < \frac{3 r^2 \ln(1/\epsilon)}{2\delta(1 + r)n}.
\end{equation*}
\textbf{Case 1.} If $\delta \geq \delta_c = \frac{6}{\pi^3}\cdot\frac{\ln(1/\epsilon)}{n}$, then
\begin{equation*}
    \theta_0^3 < \frac{\pi^3}{4}\cdot\frac{r^2}{1+r}.
\end{equation*}
Since $\delta \geq \delta_c > 0$, we have $r < 1$, and hence $\frac{r^2}{1+r} < \frac12$. Therefore $\theta_0^3 < \frac{\pi^3}{8}$, so $\theta_0 < \frac{\pi}{2}$.}

With $\theta_0 < \frac{\pi}{2}$, we can apply \Cref{cor:BE3.2} to our modified polynomial $\widetilde{Q}_c$ \rev{(as defined in \Cref{eq:Qtilde_properties}, with $M = 1/c_0$). $\widetilde{Q}_c$ has $|\widetilde{c}_0| = 1$ and $|\widetilde{c}_j| \leq M$ for all $j$, so \Cref{cor:BE3.2} applies}. Note that \rev{$\vert Q_c(z) \vert = c_0 \vert \widetilde{Q}_c(z) \vert > \epsilon \vert \widetilde{Q}_c(z) \vert$}, so:
\begin{align}
    \rev{F_c} &\geq \exp\left(-\frac{1-\nu_2^2}{8\nu_2^2} \theta_0^2n \right) \cdot \exp \left(-O\left(\frac{1}{\nu_1\theta_0}\right) \ln(1/\epsilon)\right) \nonumber \\
    &\geq \exp\left(-\frac{1 - \nu_2^2}{8\nu_2^2} \theta_0^2n -O\left(\frac{1}{\nu_1\theta_0}\right) \ln(1/\epsilon)\right).
\end{align} 
Plugging in the value of $\theta_0$, we get that
\begin{equation}
    \rev{F_c} \geq \exp\left(-O\left(\frac{\ln^{2/3}(1/\epsilon)\cdot (n(1-\nu_2^2))^{1/3}}{(\nu_1\nu_2)^{2/3}}\right)\right). 
\end{equation}
\rev{Plugging in $\nu_1 = \frac{2}{3}$ and $\nu_2 = 1 - \delta$ from \Cref{def:zflip}, we get
\begin{equation}
    \rev{F_c} \geq \exp\left(-O\left((\delta n)^{1/3}\ln^{2/3}(1/\epsilon)\right)\right). 
\end{equation}}

\rev{Thus, we have shown that
\begin{equation}
    \eta(\epsilon) \geq \exp\left(-O\left((\delta n)^{1/3}\ln^{2/3}(1/\epsilon)\right)\right). 
\end{equation}}

\rev{\textbf{Case 2.} $\delta < \delta_c$. In this case, we can just lower bound the maximum by evaluating at $\theta = \frac{\pi}{2}$, which gives us}
\begin{align*}
    F_c &\geq \exp\left(-\frac{1-\nu_2^2}{32\nu_2^2} \pi^2n \right) \cdot \exp \left(-O\left(\frac{2}{\nu_1\pi}\right) \ln(1/\epsilon)\right) \\
    &\geq \exp\left(-O\left(\frac{1}{\nu_1}\right) \ln(1/\epsilon)\right) \\
    &\geq \epsilon^{O\left(\frac{1}{\nu_1}\right)} \\
    &\rev{\geq \epsilon^{O(1)}.} &\tag{\rev{with $\nu_1 = \frac{2}{3}$}}
\end{align*} 

Therefore, we have shown that 
\[ \rev{\eta(\epsilon) \geq
    \begin{cases}
        \exp\left(-O\left((\delta n)^{1/3}\ln^{2/3}(1/\epsilon)\right)\right), & \text{if } \delta \geq \delta_c, \\[2.5ex]
        \epsilon^{O(1)}, & \text{if } \delta < \delta_c.
    \end{cases}} \qedhere \]
\end{proof}

Combining this bound with \Cref{thm:combo} yields \Cref{thm:mainresult}.

\pagebreak
\bibliographystyle{alpha}
\bibliography{references}

\end{document}